\DeclareMathAlphabet{\can}{OT1}{cmss}{m}{n}
\newtheorem{thm}{Theorem}[section]
\newtheorem{cor}[thm]{Corollary}
\newtheorem{lem}[thm]{Lemma}
\newtheorem{prop}[thm]{Proposition}
\newtheorem{rem}[thm]{Remark}
\theoremstyle{definition}
\newtheorem{defn}[thm]{Definition}
\theoremstyle{fact}
\theoremstyle{conjecture}
\numberwithin{equation}{section}
\begin{document}

\title[ ]{Extended Irreducible Binary Sextic Goppa codes}

\author[Huang] {Daitao Huang}
\address{\rm
College of Computer Science and Technology, Nanjing University of Aeronautics and Astronautics,
Nanjing,  211100, P. R. China}
\email{dthuang666@163.com}

\author[Yue]{Qin Yue}
\address{\rm Department of Mathematics, Nanjing University of Aeronautics and Astronautics,
Nanjing,  211100, P. R. China}
\email{yueqin@nuaa.edu.cn}

\thanks{This work was supported in part by National Natural Science Foundation of China (No. 61772015).  }

 \keywords{ Binary Goppa codes, Extended Goppa codes, equivalent, Act on Group;
}

\subjclass[2010]{94B05}


\begin{abstract}Let $n (>3)$ be a prime number and $\Bbb F_{2^n}$ a finite field of $2^n$ elements. Let $L =\Bbb F_{2^n}\cup \{\infty\}$ be the support set and $g(x)$ an irreducible polynomial  of degree $6$ over $\Bbb F_{2^n}$. In this paper,
we obtain an upper bound on the number of extended irreducible binary Goppa codes $\Gamma(L, g)$ of degree $6$ and length $2^n+1$.

\end{abstract}
\maketitle

\section{Introduction}
Let $\mathbb{F}_{2^n}$ be an extension of the finite field $\mathbb{F}_2$. Many codes defined over $\mathbb{F}_2$ are subfield subcodes of another code defined over the extension $\mathbb{F}_{2^n}$. For instance, Alternant codes are subfield subcodes of Generalized Reed-Solomon codes and
classical Goppa codes are a special case of Alternant codes \cite{16}.

Goppa codes are particularly appealing for cryptographic applications.
McEliece  was the first to exploit the potential of Goppa codes for the development of a secure cryptosystem \cite{9}.
Goppa codes are especially suited to this purpose because they have few invariants and the number of inequivalent codes grows exponentially with the length and dimension of the code.
Indeed, since the introduction of code-based cryptography by McEliece in 1978 \cite{18}, Goppa codes still remain among the few families of algebraic codes which resist to any structural attack. This is one of the reasons why every improvement of our knowledge of these codes is of particular interest.
In the McEliece cryptosystem, one chooses a random Goppa code as a key hence it is important that we know the number of Goppa codes for any given set of parameters. This will help in the assessment of how secure the McEliece cryptosystem is against an enumerative attack. An enumerative attack on the McEliece cryptosystem finds all Goppa codes for a given set of parameters and tests their equivalence with the public code \cite{15}.
There have been many attempts to count the number of Goppa codes.

In \cite{7,2,6}, it has been shown that  two Alternant codes $\mathcal{A}_k(v,L)$ and
 $\mathcal{A}_k(v',L')$ are equal even if the parameters $(v,L)$ are different from $(v',L')$. Based on these result, many extended Goppa codes become equivalent  by a parity check in \cite{11,1}. Moreno \cite{5} proved that there is only one extended irreducible binary degree $3$ Goppa code of any length and that there are four quartic Goppa codes of length $33$. Ryan \cite{1} focused on the quartic case of length $2^n+1$, where $n(>3)$ is prime, and gave a upper bound on the number of inequivalent irreducible Goppa codes. Some similar topic were investigated in \cite{14,12,13}.
In \cite{8}, Magamba and Ryan obtained an upper bound on the number of extended irreducible $q$-ary Goppa codes of degree $r$ and length $q^n+1$, where $p, r(>2)$ are two prime numbers and $q=p^t$.
In \cite{17}, Musukwa gave an upper bound on the number of inequivalent  extended irreducible binary Goppa codes of degree $2p$ and length $2^n+1$,  where $n$ and $p$ are two odd primes such that $p \neq n$ and $p\nmid (2^n\pm 1)$.

In this paper, we shall investigate  extended irreducible binary Goppa codes of degree $r$, where  $r$ is a product of two distinct prime numbers.  In the case: $r=6$,  we obtain upper bound of number of extended irreducible binary sextic Goppa codes of length $2^n+1$, where   $n(>3)$ is a prime.

The paper is organized as follows. In Section $2$, we remind some definitions of Alternant codes, Goppa codes, and extended Goppa codes. Moreover, we recall sufficient condition of equivalent extended Goppa codes by projective linear group acting on the projective line $\mathbb{F}_{2^n}\bigcup \{\infty\}$. In Section $3$, we investigate projective semi-linear group acting on the elelment of degree $6$ over $\mathbb{F}_{2^n}$ with $n(>3)$ a prime.
In Section $4$, we obtain the upper bound on the number of inequivalent extended Goppa codes by applying the Cauchy Frobenius Theorem. We conclude the paper in Section $5$.

\section{Preliminaries}
Let $\mathbb{F}_{2^n}$ be the finite field of order $2^n$ and $\overline{\mathbb{F}}_{2^n}=\Bbb F_{2^n}\cup\{\infty\}$ a set of coordinates for the projective line. We will introduce some basic knowledge in the following.
\subsection{Alternant code, Goppa Code and extended Goppa Code}
In the subsection, we describe concepts of Alternant codes, Coppa codes, and extended Goppa codes. In detail,  see \cite{6,1}.

\begin{defn}
Let $L=(\alpha_0,\ldots,\alpha_{m-1})$ be an $m$-tuple of distinct elements of $\mathbb{F}_{2^n}$,  $v=(v_0,\ldots,v_{m-1})$  an $m$-tuple of non-zero elements of $\mathbb{F}_{2^n}$, and  $r$  an integer less than $m$. The Alternant code $\mathcal{A}_{r}(v,L)$ is defined as follows:
 $$\mathcal{A}_{r}(v,L)=\{x=(x_0,\ldots,x_{m-1})\in \mathbb{F}_{2}^m : H_r(v,L)x^T=0\},$$
 where the  parity-check matrix is
$$H_r(v,L)=\left(
  \begin{array}{cccc}
    v_0 & v_1 & \ldots & v_{m-1} \\
    v_0\alpha_0 & v_1\alpha_1 & \ldots & v_{m-1}\alpha_{m-1} \\
    \vdots& \vdots &  & \vdots \\
    v_0\alpha_0^{r-1} & v_1\alpha_1^{r-1} & \ldots & v_{m-1}\alpha_{m-1}^{r-1} \\
  \end{array}
\right).$$
\end{defn}


\begin{defn}\label{suppp}
Let $L=(\alpha_0,\ldots,\alpha_{m-1})$ be an $m$-tuple of distinct elements of $\mathbb{F}_{2^n}$ and $g(x)\in \mathbb{F}_{2^n}[x]$  a polynomial of degree $r(< m)$ such that $g(\alpha_i)\neq 0$ for $i=0,1,\ldots,m-1$. The Goppa code $\Gamma(g,L)$  with the  Goppa polynomial $g(x)$ and the support $L$ is defined as follows:
$$\Gamma(g,L)=\left\{x=(x_0,x_2,\ldots, x_{m-1})\in \Bbb F_2^m: \sum_{i=0}^{m-1}\frac {x_i}{x-\alpha_i}\equiv 0\pmod {g(x)}\right\}.$$
If $g(x)$ is irreducible over $\mathbb{F}_{2^n}$,  $\Gamma(g,L)$ is called irreducible.
\end{defn}

\begin{rem} Let $\Gamma(g,L)$ be the Goppa code in Definition \ref{suppp}.

(1)  $\Gamma(g,L)=\mathcal{A}_r(v_{g,L},L)$, where  $v_{g,L}=(g(\alpha_0)^{-1},g(\alpha_1)^{-1},\ldots,g(\alpha_{m-1})^{-1})$.

(2) The expurgated Goppa code $\widetilde{\Gamma}(g,L)$ of  $\Gamma (g,L)$ is defined as follows:
$$\widetilde{\Gamma}(g,L)=\left\{x=(x_0,\ldots,x_{m-1})\in \Gamma(g,L): \sum\limits_{i=0}^{m-1}x_i=0\right\}.$$
In fact,  $\widetilde{\Gamma}(g,L)=\mathcal A_{r+1}(v_{g,L}, L)$, where  $v_{g,L}=(g(\alpha_0)^{-1},g(\alpha_1)^{-1},\ldots,g(\alpha_{m-1})^{-1})$.

(3) The extended Goppa code $\overline{\Gamma}(g,L)$ of $\Gamma(g,L)$ is defined as follows:
$$\overline{\Gamma}(g,L)=\left\{x=(x_0,\ldots,x_{m-1},x_m): (x_0,\ldots,x_{m-1})\in \Gamma(g,L), \sum\limits_{i=0}^{m}x_i=0\right\}.$$
\end{rem}

\begin{defn}
Let $L=(\alpha_0,\ldots,\alpha_{m-1})$ be an $m$-tuple of distinct elements of $\mathbb{F}_{2^n}$, $\overline{L}=L \bigcup \{\infty\}=(\alpha_0,\ldots,\alpha_{m-1},\infty)$, $v=(v_0,\ldots,v_m)$ an $(m+1)$-tuple of non-zero elements of $\mathbb{F}_{2^n}$, and $r$ an integer less than $m+1$. The Alternant code $\mathcal{A}_r(v,\overline{L})$ is defined as follows:
 $$\mathcal{A}_r(v,\overline{L})=\{x=(x_0,\ldots,x_{m})\in \mathbb{F}_2^{m+1}: H_r(v,\overline{L})x^T=0\},$$
where the parity-check matrix is
$$H_r(v,\overline{L})=\left(
  \begin{array}{cccc}
    v_0 & \ldots & v_{m-1} & 0 \\
    \vdots & & \vdots & \vdots \\
    v_0\alpha_0^{r-2} & \ldots & v_{m-1}\alpha_{m-1}^{r-2}& 0 \\
    v_0\alpha_0^{r-1} & \ldots & v_{m-1}\alpha_{m-1}^{r-1}& v_m \\
  \end{array}
\right).$$
\end{defn}


\begin{rem}\label{extn} Let $\Gamma(g,L)$ be the Goppa code in Definition \ref{suppp},
 $g(x)=\sum\limits_{i=0}^{r}g_ix^i$  a polynomial of degree $r$, and  $\overline L=L\cup \{\infty\}$.
Then the extended Goppa code $\overline{\Gamma}(g,L)$ of $\Gamma(g, L)$ is just the Altrnant code $\mathcal{A}_{r+1}(v_{g,\overline{L}},\overline{L})$, where $v_{g,\overline{L}}=(g(\alpha_0)^{-1},\ldots,g(\alpha_{m-1})^{-1},g(\infty)^{-1})$, $g(\infty)=g_r$.
\end{rem}

\subsection{Action of  groups}\label{agl}
We will recall the actions of the projective linear group  and the projective semi-linear group on $\mathbb{F}_{2^n}$ and $\overline{\mathbb{F}}_{2^n}=\Bbb F_{2^n}\cup \{\infty\}$, respectively.

There are some matrix  groups as follows:

(1) the affine group $$AGL_2(\mathbb{F}_{2^n})=\left\{M=\left(\begin{array}{cc}a &b \\ 0 & 1\end{array}\right):   a \in \mathbb{F}_{2^n}^*, b \in \mathbb{F}_{2^n}\right\};$$

(2) the general linear group
$$GL_2(\mathbb{F}_{2^n})=\left\{M=\left(
                               \begin{array}{cc}
                                 a & b \\
                                 c & d \\
                               \end{array}
                             \right)
 : a,b,c,d \in \mathbb{F}_{2^n}, ad-bc \neq 0\right\};$$

(3) the  projective linear group
$$PGL_2(\mathbb{F}_{2^n})=GL_2(\Bbb F_{2^n})/\{aE_2: a\in \Bbb F_{2^n}^*\},$$
where $E_2$ is the $2\times 2$ identity  matrix;

(4) the projective semi-linear group
$$P\Gamma L_2(\Bbb F_{2^n})=PGL_2(\Bbb F_{2^n})\times G,$$
where $G=Gal(\Bbb F_{2^{6n}}/\Bbb F_2)=\langle \sigma\rangle$ is the Galois group,  $\sigma(x)=x^2$ for $x\in \Bbb F_{2^{6n}}$, and its operation $\cdot$ is as follows:
$$(\widetilde A, \sigma^i)\cdot (\widetilde B, \sigma^j)=(\widetilde {A\cdot \sigma^i(B)}, \sigma^{i+j}), 0\le i, j\le 6n-1.$$

Let $M=\left(\begin{array}{cc} a&b\\c&d\end{array}\right)\in GL_2(\Bbb F_{2^n})$. Then the projective linear group $PGL_2(\Bbb F_{2^n})$ acts on $\overline {\Bbb F}_{2^n}$ as follows:
\begin{eqnarray*}PGL_2(\Bbb F_{2^n})\times \overline{\Bbb F}_{2^n}&\rightarrow& \overline {\Bbb F}_{2^n}\\
(\widetilde{M}, \zeta)&\mapsto& \widetilde M(\zeta)=M(\zeta)=\frac{a\zeta+b}{c\zeta+d},\end{eqnarray*}
where
$\frac{1}{0}=\infty$ and  $\frac{1}{\infty}=0$; the projective semi-linear group $P\Gamma L_2(\Bbb F_{2^n})$ acts on $\overline {\Bbb F}_{2^n}$ as follows:
\begin{eqnarray*}P\Gamma L_2(\Bbb F_{2^n})\times \overline{\Bbb F}_{2^n}&\rightarrow& \overline {\Bbb F}_{2^n}\\
((\widetilde{M},\sigma^i), \zeta)&\mapsto& (\widetilde M, \sigma^i)(\zeta)=M(\sigma^i(\zeta))=\frac{a\zeta^{2^i}+b}{c\zeta^{2^i}+d}.\end{eqnarray*}


In \cite{6}, there is a result as follows.

\begin{lem}\label{eq}
Let $g(x)$ be a polynomial of degree $r$ over $\mathbb{F}_{2^n}$ and $L=(\alpha_0,\ldots,\alpha_{m-1})$ an ordered tuples of $m$ distinct points in the projective line set $\overline{\mathbb{F}}_{2^n}$. Let $M=\left(\begin{array}{cc} a&b\\c&d\end{array}\right)\in GL_2(\Bbb F_{2^n})$, $L''=({M}(\alpha_0),\ldots,{M}(\alpha_{m-1}))$, $g'(x)=(-cx+a)^rg({M}^{-1} (x))$, $M^{-1}(x)= \frac{dx-b}{-cx+a}$,
and $g(-\frac dc)\ne 0$ if $c\ne0$. Then the Alternant code $\mathcal{A}_{r+1}(v_{g,L},L)$ is equal to the Alternant code  $\mathcal{A}_{r+1}(v_{g',L''},L'')$.
\end{lem}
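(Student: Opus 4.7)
The plan is to exhibit an explicit invertible $(r+1) \times (r+1)$ matrix $T$ over $\mathbb{F}_{2^n}$ such that
\[ H_{r+1}(v_{g',L''}, L'') \;=\; \Delta^{-r}\, T \cdot H_{r+1}(v_{g,L}, L), \]
where $\Delta := ad - bc \ne 0$. Because left multiplication by a nonzero scalar and by an invertible matrix preserves the right null space, and because the two parity-check matrices are indexed by the same $m$ coordinates, this will immediately force the equality $\mathcal{A}_{r+1}(v_{g,L}, L) = \mathcal{A}_{r+1}(v_{g',L''}, L'')$.

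To build $T$, I would first compute $v_{g',L''}$ explicitly. For any $\alpha_i$ with $c\alpha_i + d \ne 0$ one has $-cM(\alpha_i) + a = \Delta/(c\alpha_i + d)$, hence
\[ g'(M(\alpha_i)) \;=\; \bigl(-cM(\alpha_i) + a\bigr)^r\, g(\alpha_i) \;=\; \frac{\Delta^r}{(c\alpha_i + d)^r}\, g(\alpha_i), \]
so that $(v_{g',L''})_i = \Delta^{-r}(c\alpha_i + d)^r g(\alpha_i)^{-1}$. The $(j,i)$-entry of $H_{r+1}(v_{g',L''}, L'')$ then equals
\[ (v_{g',L''})_i\, M(\alpha_i)^j \;=\; \Delta^{-r}\, g(\alpha_i)^{-1}\, (c\alpha_i+d)^{r-j}(a\alpha_i+b)^j. \]
Expanding the polynomial $P_j(x) := (cx+d)^{r-j}(ax+b)^j = \sum_{k=0}^r T_{jk}\, x^k$ (of degree at most $r$), this entry becomes $\sum_{k} T_{jk} \cdot \Delta^{-r}\, g(\alpha_i)^{-1} \alpha_i^k$, which is precisely the $(j,i)$-entry of $\Delta^{-r}\, T \cdot H_{r+1}(v_{g,L}, L)$ with $T := (T_{jk})_{0 \le j,k \le r}$.

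Next I would verify that $T$ is invertible, equivalently that $P_0, \ldots, P_r$ are linearly independent in the space of polynomials of degree $\le r$. The change of variables $u = ax + b$, $v = cx + d$ is a linear isomorphism on the space of affine forms in $x$ because $\det M = \Delta \ne 0$; under it $P_j$ becomes $u^j v^{r-j}$, and $\{u^j v^{r-j}\}_{j=0}^{r}$ is a basis of the $(r+1)$-dimensional space of homogeneous polynomials of degree $r$ in $u, v$. The main obstacle I anticipate is handling the boundary points where $\alpha_i = \infty$ or where $c\alpha_i + d = 0$ (equivalently, $M(\alpha_i) = \infty$), since the ``infinity column'' of the parity-check matrix for a support containing $\infty$ has zeros in the first $r$ rows and $v_m$ in the last row. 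For these distinguished columns I would argue separately using the conventions $1/0 = \infty$ and $1/\infty = 0$, together with the hypothesis $g(-d/c) \ne 0$ (which guarantees that $g'$ retains degree exactly $r$, so $g'(\infty)$ is defined and nonzero); matching only the leading-degree term of $P_j$ shows that the same matrix $T$ relates the two parity-check matrices on those columns as well.
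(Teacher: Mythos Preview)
The paper does not supply its own proof of this lemma; it quotes the result from Berger \cite{6} and moves on. Your proposal therefore cannot be compared against a proof in the paper, but it can be assessed on its own merits, and it is essentially correct. The core computation---showing that the $(j,i)$-entry of $H_{r+1}(v_{g',L''},L'')$ equals $\Delta^{-r} g(\alpha_i)^{-1}(c\alpha_i+d)^{r-j}(a\alpha_i+b)^{j}$ and then expanding $P_j(x)=(cx+d)^{r-j}(ax+b)^{j}$ in the monomial basis---gives exactly the factorisation $H'=\Delta^{-r}T\,H$ you claim, and your argument that $\{P_j\}_{j=0}^{r}$ is a basis (via the induced $GL_2$-action on degree-$r$ forms) is the standard and cleanest one.

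Two small points worth tightening. First, your phrase ``indexed by the same $m$ coordinates'' is the crux of why the codes are \emph{equal} rather than merely equivalent: the $i$-th column of $H_{r+1}(v_{g',L''},L'')$ corresponds to the support element $M(\alpha_i)$, which sits in the same coordinate position $i$ as $\alpha_i$ does for $H_{r+1}(v_{g,L},L)$; you might say this explicitly. Second, for the $\infty$-columns, the clean way to carry out the check you sketch is to homogenise from the start: write the column attached to a projective point $[\lambda:\mu]$ with entries $g(\lambda,\mu)^{-1}\lambda^{j}\mu^{r-j}$, where $g(\lambda,\mu)$ is the degree-$r$ homogenisation of $g$. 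Then $M$ acts linearly on $(\lambda,\mu)$, the hypothesis $g(-d/c)\ne 0$ ensures $g'$ has degree exactly $r$ so its homogenisation behaves correctly, and the same matrix $T$ works uniformly with no case split.
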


By Lemma \ref{eq}, we have the following corollary.

\begin{cor}\label{2.777}Let $g(x)=\sum_{k=0}^rg_kx^k$ be a polynomial of degree $r$ over $\mathbb{F}_{2^n}$ and $L=(\alpha_0,\ldots,\alpha_{m-1})$ an ordered tuples of $m$ distinct points in the projective line set $\overline{\mathbb{F}}_{2^n}$. Let $M=\left(\begin{array}{cc} a&b\\c&d\end{array}\right)\in GL_2(\Bbb F_{2^n})$, $\tau\in G=Gal(\Bbb F_{2^{6n}}/\Bbb F_2)$,  $L'=(\tau(\alpha_0),\ldots,\tau(\alpha_{m-1}))$,   $L''=({M}(\tau(\alpha_0)),\ldots,{M}(\tau(\alpha_{m-1})))$,  $\tau g(x)=\sum_{k=0}^r\tau(g_k)x^k$,  $(\tau g)'(x)=(-cx+a)^r\cdot \tau g(M^{-1} (x))$, and $(\tau g)(-\frac dc)\ne 0$ if $c\ne 0$. Then the Alternant code $\mathcal{A}_{r+1}(v_{g,L},L)$ is equal to the Alternant code  $\mathcal{A}_{r+1}(v_{(\tau g)',L''},L'')$.
\end{cor}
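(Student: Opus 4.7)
The plan is to factor the statement into two steps: a Galois twist step that moves from $(g,L)$ to $(\tau g, L')$, followed by a direct application of Lemma \ref{eq} that moves from $(\tau g, L')$ to $((\tau g)', L'')$. Composing these two equalities will yield the corollary.

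For the first step, I would exploit that $\tau\in \Gal(\mathbb{F}_{2^{6n}}/\mathbb{F}_2)$ fixes $\mathbb{F}_2$ pointwise. Consequently, for any matrix $H$ with entries in $\mathbb{F}_{2^{6n}}$ and any $x\in\mathbb{F}_2^m$, we have $H x^T=0$ if and only if $\tau(H)\,x^T=0$, since $\tau(H x^T)=\tau(H)\,x^T$ when the coordinates of $x$ lie in the fixed field of $\tau$. Applying this to $H=H_{r+1}(v_{g,L},L)$ and computing the $\tau$-image entry by entry, one uses that $\tau$ is a ring homomorphism to get $\tau(g(\alpha_i))=(\tau g)(\tau(\alpha_i))$ and $\tau(\alpha_i^k)=\tau(\alpha_i)^k$, so that
\begin{equation*}
\tau\bigl(H_{r+1}(v_{g,L},L)\bigr)=H_{r+1}(v_{\tau g, L'}, L').
\end{equation*}
Taking right kernels over $\mathbb{F}_2$ then gives $\mathcal{A}_{r+1}(v_{g,L},L)=\mathcal{A}_{r+1}(v_{\tau g, L'}, L')$. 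Note that the entries of $L'$ remain pairwise distinct (as $\tau$ is injective) and $(\tau g)(\tau(\alpha_i))=\tau(g(\alpha_i))\neq 0$, so $v_{\tau g, L'}$ is well-defined.

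For the second step, I would invoke Lemma \ref{eq} with the Goppa polynomial $\tau g$ (still of degree $r$, since $\tau(g_r)\neq 0$) and the support tuple $L'$, together with the matrix $M$. The hypothesis $(\tau g)(-d/c)\neq 0$ when $c\neq 0$ is precisely the non-vanishing condition required by Lemma \ref{eq} applied to $\tau g$, guaranteeing that $(\tau g)'(x)=(-cx+a)^r\,\tau g(M^{-1}(x))$ is a genuine Goppa polynomial on the transformed support $L''=(M(\tau(\alpha_0)),\ldots,M(\tau(\alpha_{m-1})))$. Lemma \ref{eq} therefore gives $\mathcal{A}_{r+1}(v_{\tau g, L'}, L')=\mathcal{A}_{r+1}(v_{(\tau g)', L''}, L'')$, and concatenating the two equalities yields the claim.

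There is no real obstacle here: the Galois step is essentially the observation that subfield subcodes are Galois-stable, and the projective-linear step is literally Lemma \ref{eq}. The only bookkeeping points to verify are that the new support tuples stay distinct (immediate from $\tau$ being a field automorphism and $M$ acting bijectively on $\overline{\mathbb{F}}_{2^n}$) and that the Goppa-polynomial non-vanishing conditions are preserved under each step, both of which follow directly from the hypotheses.
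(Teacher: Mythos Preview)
Your proposal is correct and follows exactly the same two-step decomposition as the paper: first pass from $(g,L)$ to $(\tau g,L')$ using Galois-stability of the subfield subcode, then apply Lemma \ref{eq} to pass from $(\tau g,L')$ to $((\tau g)',L'')$. Your write-up is in fact more explicit than the paper's, which simply asserts the first step by saying the Alternant code ``is a subfield subcode'' without spelling out the parity-check computation.
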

\begin{proof} Since the Alernant code $\mathcal{A}_{r+1}(v_{g,L},L)$ is a subfield subcode, the Alternant code is equal to the Alternant code
$\mathcal A_{r+1}(v_{\tau g, L'}, L')$. By Lemma \ref{eq}, the Alternant code $\mathcal A_{r+1}(v_{\tau g, L'}, L')$ is equal to the Alternant code
$\mathcal A_{r+1}(v_{(\tau g)', L''}, L'')$.
\end{proof}

In Corollary \ref{2.777}, if $\tau$ is equal to the identity transformation, then Lemma \ref{eq} is direct from Corollary \ref{2.777}.
If
$g(x)$ is irreducible over $\Bbb F_{2^n}$  and has a root $\alpha$ in an extension over $\Bbb F_{2^n}$, then $(\tau g)'(x)=(-cx+a)^r\cdot \tau g(M^{-1}(x))$ is the irreducible polynomial of degree $r$  over $\Bbb F_{2^n}$  and has a root $\beta$ in an extension with $$\beta=(\widetilde {M}, \tau)\alpha =M(\tau(\alpha))=\frac{a\tau(\alpha)+b}{c\tau(\alpha)+d}.  $$

\subsection{ Equivalent extended Goppa codes}
Let $g(x)$ be irreducible of degree $r$ over $\Bbb F_{2^n}$ and  $L=(\alpha_0,\ldots,\alpha_{2^n-1})$ an ordered subset of $2^n$ distinct points in  ${\mathbb{F}}_{2^n}$. Suppose that $g(x)$ has a root $\alpha$ in an extension over $\Bbb F_{2^n}$.
Then
 $$\Gamma(g, L)=\{x=(x_0,x_1,\ldots,x_{2^n-1})\in \Bbb F_2^{2^n}: H(\alpha)x^T=0\}, $$ where the parity check matrix is $$H(\alpha)=(\frac{1}{\alpha-\alpha_0}~\frac{1}{\alpha-\alpha_1}\ldots \frac{1}{\alpha-\alpha_{2^n-1}}).$$ See more detail in \cite{3}.

For convenience, we will denoted by
$$C(\alpha)=\Gamma (g, L)\ \ \and \ \overline {C(\alpha)}=\overline {\Gamma}(g, L),$$
where $\overline {\Gamma}(g, L)$ is the extended Goppa code of $\Gamma(g, L)$.

The following Lemma is similar to that in \cite{1}. For completeness, we give the proof.

\begin{lem}\label{equ}
Let $L=(\alpha_0,\ldots,\alpha_{2^n-1})$ and $\overline L=(\alpha_0,\ldots,\alpha_{2^n-1},\infty)$ be  ordered tuples of $2^n$ and $2^n+1$ distinct points in the projective line $\overline{\mathbb{F}}_{2^n}$, respectively.
Let $\alpha$ be  a root of an irreducible polynomial $g(x)=\sum_{i=0}^r g_ix^i$  of degree $r$ over $\Bbb F_{2^n}$. Let $M=\left(\begin{array}{cc} a&b\\c&d\end{array}\right)\in GL_2(\Bbb F_{2^n})$, $\tau\in G=Gal(\Bbb F_{2^{6n}}/\Bbb F_2)$, $(\widetilde M,\tau)\in P\Gamma L_2(\Bbb F_{2^n})$,
 $\overline L''=(M(\tau(\alpha_0)),\ldots,M(\tau(\alpha_{2^n-1})),M(\infty))$,
$\tau g(x)=\sum_{i=0}^r\tau(g_i)x^i$,  and $(\tau g)'(x)=(-cx+a)^r\cdot \tau g(M^{-1} (x))$ a root $\beta=(\widetilde M, \tau)\alpha=\frac{a\tau(\alpha)+b}{c\tau(\alpha)+d}$. Then  the Alternant code  $\mathcal A_{r+1}(v_{g, \overline L}, \overline L)$ is equal to the Alternant code $\mathcal A_{r+1}(v_{(\tau g)', \overline L''}, \overline L'')$ and
the extended Goppa code $\overline{C(\alpha)}$ is permutation equivalent to the extended Goppa code $\overline{C(\beta)}$, denoted by  $\overline{C(\alpha)}\cong \overline{C(\beta)}$.
 \end{lem}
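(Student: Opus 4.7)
The plan is to prove the lemma by combining a Galois descent step for $\tau$ with the linear-fractional argument underlying Lemma \ref{eq}, while carefully handling the extra $\infty$ coordinate that appears in $\overline L$ but not in the finite support treated by Corollary \ref{2.777}. By Remark \ref{extn}, $\overline{C(\alpha)} = \mathcal{A}_{r+1}(v_{g,\overline L},\overline L)$ and $\overline{C(\beta)} = \mathcal{A}_{r+1}(v_{h',\overline{L''}},\overline{L''})$ with $h = \tau g$ and $h' = (\tau g)'$, so proving the asserted equality of Alternant codes directly yields the permutation equivalence of the extended Goppa codes.

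First (Galois descent), since $\mathcal{A}_{r+1}(v_{g,\overline L},\overline L)$ consists of vectors in $\mathbb{F}_2^{2^n+1}$, applying $\tau$ to every parity-check equation produces an equivalent system whose matrix is $H_{r+1}(v_{\tau g,\tau\overline L},\tau\overline L)$, because $\tau$ fixes $\infty$ and permutes $\mathbb{F}_{2^n}$. Thus $\mathcal{A}_{r+1}(v_{g,\overline L},\overline L) = \mathcal{A}_{r+1}(v_{h,\overline{L'}},\overline{L'})$ with $\overline{L'} = \tau\overline L$.

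Second ($M$-action), I would prove $\mathcal{A}_{r+1}(v_{h,\overline{L'}},\overline{L'}) = \mathcal{A}_{r+1}(v_{h',\overline{L''}},\overline{L''})$ by exhibiting an invertible row transformation $T\in GL_{r+1}(\mathbb{F}_{2^n})$ such that, after the column relabeling induced by $\zeta\mapsto M(\zeta)$, $T\cdot H_{r+1}(v_{h,\overline{L'}},\overline{L'})$ coincides with $H_{r+1}(v_{h',\overline{L''}},\overline{L''})$. On finite columns, the construction of $T$ parallels the proof of Lemma \ref{eq}: expanding $x^k = ((dy-b)/(-cy+a))^k$ and clearing the common factor $(-cy+a)^r$ expresses each row $\{v_i\zeta_i^k\}_k$ as an $\mathbb{F}_{2^n}$-linear combination of rows $\{v'_i M(\zeta_i)^k\}_k$, using the key identity $h'(M(\zeta)) = ((ad-bc)/(c\zeta+d))^r h(\zeta)$ to absorb the denominator into $v_{h',\overline{L''}}$.

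The main obstacle is bookkeeping the $\infty$ column, which splits into two subcases. When $c=0$, $M$ fixes $\infty$, the last column stays in place, and one only verifies that $h'$ has degree $r$ (so $v_{h',\overline{L''}}$ is componentwise nonzero) and that the zero pattern in the last column survives the row operations. When $c\ne 0$, $M$ sends $\infty$ to the finite point $a/c$ while the unique $\alpha_j$ with $\tau(\alpha_j)=d/c$ is sent to $\infty$; hence the $\infty$-column migrates from position $m$ to position $j$ in $\overline{L''}$. The hypothesis $(\tau g)(-d/c)\ne 0$ is exactly what guarantees that the leading coefficient $c^r h(-d/c)$ of $h'$ is nonzero, so that $h'$ has degree $r$ and the swap is well-defined; verifying that $T$ simultaneously transforms the special column $(0,\ldots,0,v_m)^T$ into the generic column $\{v'_{\cdot}(a/c)^k\}_k$ at position $m$ and the generic $j$-th column into the new special $\infty$-column requires a careful local computation at that position. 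Once this matrix identity is established, the fact that $h'$ is irreducible of degree $r$ with root $\beta = M(\tau(\alpha))$, together with the coordinate permutation induced by $M$, delivers $\overline{C(\alpha)} \cong \overline{C(\beta)}$.
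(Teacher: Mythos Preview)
Your argument is correct, but it duplicates work that the paper has already packaged into Lemma~\ref{eq} and Corollary~\ref{2.777}. You seem to have assumed that Corollary~\ref{2.777} applies only to finite supports, but in the paper both Lemma~\ref{eq} and Corollary~\ref{2.777} are stated for an ordered tuple of points in the \emph{projective} line $\overline{\mathbb{F}}_{2^n}$, so the point $\infty$ is already allowed in the support there. Consequently the paper's proof is much shorter: it identifies $\overline{\Gamma}(g,L)=\mathcal A_{r+1}(v_{g,\overline L},\overline L)$ via Remark~\ref{extn}, applies Corollary~\ref{2.777} verbatim with $\overline L$ in place of $L$ to obtain the equality $\mathcal A_{r+1}(v_{g,\overline L},\overline L)=\mathcal A_{r+1}(v_{(\tau g)',\overline L''},\overline L'')$, and then notes that $\overline L$ and $\overline L''$ are the same projective line in different orders, so the extended Goppa codes differ only by a coordinate permutation.

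What you do instead is essentially to \emph{reprove} Lemma~\ref{eq} in the projective setting: the Galois-descent step is identical, and your explicit row transformation $T$ together with the $c=0$/$c\neq 0$ case split at the $\infty$ column is exactly the computation one would carry out to establish Lemma~\ref{eq} for supports containing $\infty$. This buys you a self-contained argument independent of how carefully the reference~\cite{6} treats the $\infty$ point, and your observation that the irreducibility of $g$ automatically guarantees $(\tau g)(d/c)\neq 0$ is a nice remark. But within the paper's logical structure it is redundant: once Lemma~\ref{eq} is granted for projective supports, the present lemma follows in three lines.
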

\begin{proof} Let $\overline L=(\alpha_0,\ldots,\alpha_{2^n-1},\infty)=L\cup \{\infty\}$ be an ordered tuple of $2^n+1$ distinct points in the projective line $\overline{\mathbb{F}}_{2^n}=\Bbb F_{2^n}\cup \{\infty\}$ and  $g(x)$  irreducible of degree $r$ over $\Bbb F_{2^n}$.  Then by Remark \ref{extn},
 $$\mathcal A_{r+1}(v_{g, \overline L}, \overline L)=\overline {\Gamma}( g,L),$$
 where $\mathcal A_{r+1}(v_{g, \overline L}, \overline L)$ is the Alternant code and $\overline {\Gamma}(g,L)$ is the extended code of $\Gamma(g,L)$.
 Let $M=\left(\begin{array}{cc} a&b\\c&d\end{array}\right)\in GL_2(\Bbb F_{2^n})$, $\tau\in G=Gal(\Bbb F_{2^{6n}}/\Bbb F_2)$, $(\widetilde M,\tau)\in P\Gamma L_2(\Bbb F_{2^n})$,  $\tau g(x)=\sum_{k=0}^r\tau(g_k)x^k$,  $(\tau g)'(x)=(-cx+a)^r\cdot \tau g(M^{-1} (x))$ a root $\beta=(\widetilde M, \tau)\alpha=\frac{a\tau(\alpha)+b}{c\tau(\alpha)+d}$, and $$\overline L''=(\widetilde M, \tau)(\overline L)=(M(\tau(\alpha_0), M(\tau(\alpha_1)), \ldots, M(\tau(\alpha_{2^n-1}), M(\infty)).$$
 Then by Corollary \ref{2.777},
 \begin{equation}\label{2.1}\mathcal A_{r+1}(v_{g, \overline L}, \overline L)=\mathcal A_{r+1}(v_{(\tau g)', \overline L''}, \overline L''). \end{equation}
 Let $g(x)$ have a root $\alpha$ in an extension over $\Bbb F_{2^n}$. Then
$(\tau g)'(x)$
 is irreducible over $\Bbb F_{2^n}$ and  has a root  $\beta=(\widetilde M, \tau)\alpha=\frac{a\tau(\alpha)+b}{c\tau(\alpha)+d}$ in an extension over $\Bbb F_{2^n}$. Note that $\overline L$  and $\overline L''$ arrange in  distinct orders of the projective set $\overline {\Bbb F}_{2^n}$.  Hence by Remark \ref{extn} and (\ref{2.1}),
 $$\overline {\Gamma}(g, L)\cong \overline {\Gamma}((\tau g)', L),\ i.e.,\ \overline {C(\alpha)}\cong \overline {C(\beta)}, $$
i.e., the extended Goppa code $\overline{C(\alpha)}$ is permutation equivalent to the extended Goppa code $\overline{C(\beta)}$.
\end{proof}

In \cite{1}, Ryan used Lemma \ref{equ} to give an upper bound on the number of extended irreducible binary quartic Goppa codes of length $2^n+1$, where $n(>3)$ is a prime number.
In this paper, we shall obtain an upper bound on the number of extended irreducible binary sixtic quatic Goppa codes of length $2^n+1$, where $n(>3)$ is a prime number.
By $3|(2^n+1)$ for any $n(>3)$ a prime, we prove that there are orbits of length $3$ in Propositions
\ref{9231},  which is important and different from \cite{1}.



\section{  prime $n>3$}
In this section, we investigate irreducible binary sextic Goppa codes over $\mathbb{F}_{2^n}$. We always assume that $n (>3)$ is a prime number.

\begin{defn}
The set $\mathbb{S}=\mathbb{S}(n,6)$ is the set of all elements in $\mathbb{F}_{2^{6n}}$ of degree $6$ over $\mathbb{F}_{2^n}$.
\end{defn}

 In fact, $\mathbb{S}=\mathbb{F}_{2^{6n}}\setminus \{\mathbb{F}_{2^{2n}}\cup \mathbb{F}_{2^{3n}}\}$, then $$| \mathbb{S}|=2^{6n}-2^{2n}-2^{3n}+2^n.$$

\begin{lem}\label{mainl} The projective semi-linear group  $P\Gamma L_2(\Bbb F_{2^n})$ acts on
 on the set $\mathbb{S}$:
 \begin{eqnarray*}\pi: P\Gamma L_2(\mathbb{F}_{2^n}) \times \mathbb{S} &\rightarrow& \mathbb{S}\\
  ((\widetilde{M},\sigma^i), \alpha) &\mapsto& (\widetilde{M},\sigma^i)(\alpha)=\frac{a\alpha^{2^i}+b}{c\alpha^{2^i}+d}=\beta,  \end{eqnarray*}
 where  $M=\left(\begin{array}{cc} a&b\\c&d\end{array}\right)\in GL_2(\Bbb F_{2^n})$, $G=Gal(\Bbb F_{2^{6n}}/\Bbb F_2)=\langle \sigma\rangle$.

 Moreover, by Lemma \ref{equ} the extended Goppa code $\overline{C(\alpha)}$ is permutation  equivalent to the extended Goppa code $\overline{C(\beta)}$, i.e., $\overline {C(\alpha)}\cong \overline {C(\beta)}$.
\end{lem}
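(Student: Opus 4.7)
The plan is to separate the statement into two assertions and verify each: (a) the displayed formula defines a map $P\Gamma L_2(\mathbb{F}_{2^n}) \times \mathbb{S} \to \mathbb{S}$ satisfying the group action axioms; (b) the resulting permutation equivalence $\overline{C(\alpha)} \cong \overline{C(\beta)}$ is an immediate consequence of Lemma \ref{equ} applied with $\tau = \sigma^i$. Since the action axioms themselves have already been spelled out for the action of $P\Gamma L_2(\mathbb{F}_{2^n})$ on the full projective line $\overline{\mathbb{F}}_{2^n}$ in Subsection \ref{agl}, the only remaining point for (a) is to check that the subset $\mathbb{S}$ is invariant.

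To do this I would factor the action as $\alpha \mapsto \sigma^i(\alpha) \mapsto M(\sigma^i(\alpha))$ and handle each step in turn. For the Galois step, $\sigma^i$ is a field automorphism of $\mathbb{F}_{2^{6n}}$; by uniqueness of subfields of a given order it maps $\mathbb{F}_{2^n}$ bijectively to itself (as a set, not pointwise), and hence also maps each of $\mathbb{F}_{2^{2n}}$ and $\mathbb{F}_{2^{3n}}$ to itself. It follows that if $g(x) \in \mathbb{F}_{2^n}[x]$ is the minimal polynomial of $\alpha$ over $\mathbb{F}_{2^n}$, then applying $\sigma^i$ coefficient-wise yields the minimal polynomial of $\sigma^i(\alpha)$ in $\mathbb{F}_{2^n}[x]$, of the same degree; in particular $\sigma^i$ preserves $\mathbb{S} = \mathbb{F}_{2^{6n}} \setminus (\mathbb{F}_{2^{2n}} \cup \mathbb{F}_{2^{3n}})$. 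For the M\"obius step, if $\gamma = \sigma^i(\alpha) \in \mathbb{S}$, then $\gamma \notin \mathbb{F}_{2^n}$, so $c\gamma+d \neq 0$ and $M(\gamma) \in \mathbb{F}_{2^{6n}}$; since $M^{-1}$ also has entries in $\mathbb{F}_{2^n}$, we get $\mathbb{F}_{2^n}(M(\gamma)) = \mathbb{F}_{2^n}(\gamma)$, which forces $M(\gamma)$ to still have degree $6$ over $\mathbb{F}_{2^n}$, i.e.\ $M(\gamma) \in \mathbb{S}$.

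For (b), once well-definedness has been secured, Lemma \ref{equ} applied with the given $M$, with $\tau = \sigma^i$, and with $g(x)$ the minimal polynomial of $\alpha$ over $\mathbb{F}_{2^n}$, produces $\overline{C(\alpha)} \cong \overline{C(\beta)}$ directly; no further work is needed because the root $\beta$ of $(\tau g)'(x)$ asserted in that lemma is precisely $\frac{a\tau(\alpha)+b}{c\tau(\alpha)+d} = (\widetilde{M},\sigma^i)(\alpha)$. The main (essentially the only) substantive point is the observation that $\sigma^i$ does not fix $\mathbb{F}_{2^n}$ pointwise when $n \nmid i$; one therefore cannot argue ``the minimal polynomial is preserved by $\sigma^i$'' in the naive sense. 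What saves the argument is the weaker fact that $\sigma^i$ permutes the subfields of $\mathbb{F}_{2^{6n}}$, which is enough to guarantee degree preservation. Apart from this subtlety, the proof is a routine unwinding of definitions from Subsection \ref{agl} together with an invocation of Lemma \ref{equ}.
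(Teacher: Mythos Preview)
Your proposal is correct and matches the paper's (implicit) approach: the paper states this lemma without a separate proof, treating the action on $\mathbb{S}$ as evident from the definitions in Subsection~\ref{agl} and justifying the equivalence $\overline{C(\alpha)}\cong\overline{C(\beta)}$ within the statement itself by citing Lemma~\ref{equ}. Your write-up simply supplies the invariance-of-$\mathbb{S}$ verification (via the Galois and M\"obius steps) that the paper leaves to the reader.
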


By Lemma \ref{mainl}, there is an equivalent relation $\sim$ in $\Bbb S$:  for $\alpha, \beta\in \Bbb  S$,
$$\alpha \sim \beta \Longleftrightarrow \exists (\widetilde M, \sigma^i)\in P\Gamma L_2(\mathbb{F}_{2^n}),\ (\widetilde M, \sigma^i)\alpha =\beta;$$
moreover, by Lemma \ref{equ}
\begin{equation}\label{3.1} \alpha\sim \beta
\Longrightarrow \overline {C(\alpha)}\cong \overline {C(\beta)}. \end{equation}

There is an interesting question:
does the converse proposition of (\ref{3.1}) hold?

By Lemma \ref{mainl}, the orbit of $\alpha\in \Bbb S$ is
\begin{eqnarray}\label{3.2}&&\Omega_{\alpha}=\{\beta\in \Bbb S:\alpha \sim \beta\}\\ &=&\left\{(\widetilde M, \sigma^i)\alpha=\frac{a\alpha^{2^i}+b}{c\alpha^{2^i}+d}: \forall M=\left(\begin{array}{cc} a&b\\c&d\end{array}\right)\in GL_2(\Bbb F_{2^n}), 0\le i\le 6n-1\right\} \nonumber\\  &=&\left\{\sigma^i(\widetilde M( \alpha))=(\frac{a\alpha+b}{c\alpha +d})^{2^i}: \forall M=\left(\begin{array}{cc} a&b\\c&d\end{array}\right)\in GL_2(\Bbb F_{2^n}), 0\le i\le 6n-1\right\}.\nonumber\end{eqnarray}
Then
 the number of different orbits in $\Bbb S$ under the action of the group $P\Gamma L_2(\Bbb F_{2^n})$ is the upper bound of inequivalent extended irreducible binary sixtic   Goppa codes.

To compute the number of orbits in $\Bbb S$ under the action of   the group $P\Gamma L_2(\Bbb F_q)$, by (\ref{3.2}) we shall divide the action into two actions.

(1) We consider the projective linear group $PGL_2(\Bbb F_{2^n})$ acting on $\Bbb S$. Then
\begin{equation}\label{3.32}\Bbb S=\bigcup_{\alpha \in I}^{\cdot} O_{\alpha},\ \Omega=\{O_{\alpha}: \alpha\in I\}, \end{equation}
where $\Omega$ is the set of all distinct orbits in $\Bbb S$ under the action of the projective linear group $PGL_2(\Bbb F_{2^n})$. In fact, $\Omega$ is a partition of $\Bbb S$.

(2) We consider the Galois group $G=Gal(\Bbb F_{2^{6n}}/\Bbb F_2)=\langle \sigma \rangle $ acting on $\Omega$. Then there is a partition of $\Omega$:
\begin{equation}\label{3.33}\Omega=\bigcup_{\alpha\in J}^{\cdot}\Omega_{\alpha},\end{equation}  $$\Omega_{\alpha}=G(O_{\alpha})=\{\sigma^i(O_{\alpha}): 0\le i\le 6n-1\}\subset \Omega,$$
where $\{\Omega_{\alpha}: \alpha\in J\}$ is the set of all distinct orbits in $\Bbb S$ under the action of the  projective semi-linear group $P\Gamma L_2(\Bbb F_{2^n})$. In fact, $|J|$ is just the number of different orbits in $\Bbb S$ under the action of the group $P\Gamma L_2(\Bbb F_{2^n})$.

In order to
   count the size  of  the set $J$, we shall use
 Cauchy Frobenius Theorem about orbits in \cite{martin}.
 \begin{lem}\label{C}
Let $G$ be a finite group acting on a set $X$. For any $g \in G$, let $F(g)$ denote the set of elements of $X$ fixed by $g$. Then the number of distinct orbits in $X$ under the action of the group $G$ is $\frac{1}{| G |}\sum\limits_{g \in G}| F(g)| $.
\end{lem}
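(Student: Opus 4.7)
The plan is to prove this classical Cauchy--Frobenius (Burnside) identity by the standard double-counting argument applied to the incidence set
\[
S \;=\; \{(g,x)\in G\times X : g\cdot x = x\}.
\]
First I would count $|S|$ by fixing the first coordinate: for each $g\in G$ the number of $x$ with $g\cdot x=x$ is exactly $|F(g)|$, so $|S|=\sum_{g\in G}|F(g)|$. Then I would count $|S|$ by fixing the second coordinate: for each $x\in X$ the number of $g$ fixing $x$ is $|\mr{Stab}_G(x)|$, giving $|S|=\sum_{x\in X}|\mr{Stab}_G(x)|$.

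Next I would convert the stabilizer sum into a sum over orbits by invoking the orbit--stabilizer theorem: for any $x\in X$ we have $|\mr{Stab}_G(x)|\cdot|O_x|=|G|$, where $O_x$ is the orbit of $x$. Grouping the sum $\sum_{x\in X}|\mr{Stab}_G(x)|$ according to the partition of $X$ into orbits $X=\bigsqcup_{i=1}^{N} O_i$, every $x\in O_i$ contributes $|G|/|O_i|$, so the contribution of the whole orbit $O_i$ is $|O_i|\cdot|G|/|O_i|=|G|$. Summing over the $N$ orbits yields $\sum_{x\in X}|\mr{Stab}_G(x)|=N\cdot|G|$.

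Combining the two expressions for $|S|$ gives $\sum_{g\in G}|F(g)|=N\cdot|G|$, which after dividing by $|G|$ is precisely the claimed formula. There is no real obstacle here: the only ingredients are the orbit--stabilizer identity and the fact that the orbits partition $X$, both of which are standard for any finite group action. Since the statement is quoted from \cite{martin} for use in Section~4, a short self-contained presentation of this double-counting proof suffices, and no feature specific to $P\Gamma L_2(\mb F_{2^n})$ or to $\Bbb S$ enters the argument at this stage.
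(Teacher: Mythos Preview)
Your proof is correct and is the standard double-counting argument for the Cauchy--Frobenius lemma. The paper does not actually supply a proof of this statement at all---it merely quotes the result from \cite{martin}---so there is no approach to compare against; your self-contained argument is entirely adequate and nothing more is needed.
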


We firstly state the main result in the following theorem.

\begin{thm}\label{725}
Let $n>3$ be a prime number. The number of extended irreducible binary sextic Goppa codes of length $2^n+1$ over $\Bbb F_{2^n}$ is at most $\frac{2^{3n}+2^{2n}+3\cdot 2^n+12n-18}{6n}$.
\end{thm}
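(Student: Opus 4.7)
The plan is to count $|J|$ by a two-tier application of the Cauchy--Frobenius Lemma (Lemma~\ref{C}). By the partitions (\ref{3.32})--(\ref{3.33}), $|J|$ equals the number of orbits of $G=\langle\sigma\rangle$ acting on the set $\Omega$ of $PGL_2(\Bbb F_{2^n})$-orbits on $\Bbb S$. Since $|G|=6n$ and the fixed set of $\sigma^i$ in $\Omega$ depends only on the cyclic subgroup $\langle\sigma^i\rangle$, grouping by $d=\gcd(i,6n)$ gives
$$|J|\;=\;\frac{1}{6n}\sum_{d\,\mid\,6n}\varphi(6n/d)\,f(d),\qquad f(d):=|F(\sigma^d)|.$$
For $n>3$ prime the divisors of $6n$ are exactly $\{1,2,3,6,n,2n,3n,6n\}$, so eight terms must be controlled.

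For $d=6n$ one has $f(6n)=|\Omega|$, which I would evaluate by a second application of Cauchy--Frobenius, now to $PGL_2(\Bbb F_{2^n})$ acting on $\Bbb S$. A non-identity M\"obius transformation fixes $\alpha\in\overline{\Bbb F}_{2^n}$ only when $\alpha$ is a root of a quadratic over $\Bbb F_{2^n}$, so every such fixed $\alpha$ lies in $\Bbb F_{2^{2n}}\cup\{\infty\}$, a set disjoint from $\Bbb S$ by the very definition of $\Bbb S$. Hence only the identity element contributes, and $|\Omega|=|\Bbb S|/|PGL_2(\Bbb F_{2^n})|=2^{3n}+2^n-1$ exactly, accounting for the leading $2^{3n}+2^n$ in the numerator of the target bound.

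For each $d<6n$, an orbit $O_\alpha\in\Omega$ is fixed by $\sigma^d$ iff $\sigma^d(\alpha)=M(\alpha)$ for some $M\in PGL_2(\Bbb F_{2^n})$, equivalently iff $\alpha$ satisfies a polynomial equation of the form $c\alpha^{2^d+1}+e\alpha^{2^d}-a\alpha-b=0$ coming from $M=\bigl(\begin{smallmatrix}a&b\\c&e\end{smallmatrix}\bigr)\in GL_2(\Bbb F_{2^n})$. This joint ``Galois equals M\"obius'' condition forces strong subfield constraints on $\alpha$, so I would enumerate the solutions case-by-case according to the subfield $\Bbb F_{2^{\gcd(d,6n)}}$ and then convert the count of $\alpha$'s into a count of orbits via orbit--stabiliser. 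The small divisors $d\in\{1,2,3,6\}$ contribute negligibly, since they trap $\alpha$ in a subfield too small to meet $\Bbb S$; the substantive contributions come from $d\in\{n,2n,3n\}$.

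The principal obstacle, and the genuine novelty beyond Ryan's quartic case~\cite{1}, is the treatment of $G$-orbits of length $3$ on $\Omega$ announced in Proposition~\ref{9231}. Because $3\mid 2^n+1$ for every odd prime $n>3$, the group $PGL_2(\Bbb F_{2^n})$ contains order-$3$ M\"obius transformations whose action combines with $\sigma^{2n}$ (and with $\sigma^n$) to stabilise certain $PGL_2(\Bbb F_{2^n})$-orbits without $\sigma$ or $\sigma^2$ doing so, yielding $G$-orbits of length exactly $3$. Locating and counting these short orbits is what produces the additional $3\cdot 2^n+12n-18$ in the numerator. Once all eight values $f(d)$ are assembled in the Burnside formula above and simplified (using $\sum_{d\,\mid\,6n}\varphi(6n/d)=6n$), one obtains the exact value of $|J|$; the stated theorem then follows from~(\ref{3.1}) and Lemma~\ref{equ}, which ensure that inequivalent extended irreducible binary sextic Goppa codes come from distinct $P\Gamma L_2(\Bbb F_{2^n})$-orbits on $\Bbb S$, so that $|J|$ upper-bounds their number.
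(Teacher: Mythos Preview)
Your overall strategy coincides with the paper's: apply Lemma~\ref{C} to $G=\langle\sigma\rangle$ acting on $\Omega$, with $|\Omega|=|\Bbb S|/|PGL_2(\Bbb F_{2^n})|=2^{3n}+2^n-1$ coming from the trivial stabilisers. Where your sketch goes wrong is in the attribution of which divisors carry weight, and following that intuition would lose the $12n$ term entirely.

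You assert that $d\in\{1,2,3,6\}$ ``trap $\alpha$ in a subfield too small to meet $\Bbb S$'' and hence contribute negligibly, while the substance lies in $d\in\{n,2n,3n\}$. The paper's case analysis shows otherwise: $|F(\sigma^n)|=0$ (Proposition~\ref{923}), whereas $|F(\sigma^6)|=9$ (Proposition~\ref{4.61}) and $|F(\sigma^3)|=3$. Your subfield heuristic fails because for $n>3$ prime one has $\Bbb F_{2^6}\not\subset\Bbb F_{2^{2n}}\cup\Bbb F_{2^{3n}}$, so $\Bbb F_{2^6}\cap\Bbb S$ has $54$ elements; an element of degree $6$ over $\Bbb F_2$ is still of degree $6$ over $\Bbb F_{2^n}$. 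Consequently the $12n$ in the numerator comes \emph{entirely} from the ``small'' divisors you dismissed, namely $|F(\sigma^6)|\,\varphi(n)+|F(\sigma^3)|\,\varphi(2n)=9(n-1)+3(n-1)=12n-12$, while $|F(\sigma^{6n})|+|F(\sigma^{3n})|\,\varphi(2)+|F(\sigma^{2n})|\,\varphi(3)=(2^{3n}+2^n-1)+(2^{2n}-1)+2(2^n-2)$ supplies the rest. You also link the length-$3$ phenomenon to $\sigma^n$; in fact Proposition~\ref{9231} concerns $\sigma^{2n}$ (order~$3$), and the orbits of length~$3$ there are $\langle\sigma^{2n}\rangle$-orbits on the affine partition $\Delta_\alpha$ of~(\ref{3.4}), not $G$-orbits on $\Omega$. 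So the framework is right, but the case-by-case bookkeeping must be redone before the formula can be assembled.
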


To prove Theorem \ref{725}, we shall show some propositions.

\subsection{ $PGL_2(\mathbb{F}_{2^n}) \times \mathbb{S} \rightarrow \mathbb{S}$ }\label{sub3.2}

Consider the projective linear group $PGL_2(\mathbb{F}_{2^n})$ acting on the set $\mathbb{S}$ as follows:
 \begin{eqnarray*} PGL(\mathbb{F}_{2^n}) \times \mathbb{S} &\rightarrow& \mathbb{S}\\
  (\widetilde{M}, \alpha) &\mapsto& \widetilde{M} (\alpha)= M(\alpha)=\frac{a\alpha+b}{c\alpha+d},\end{eqnarray*}
 where $M=\left(
                               \begin{array}{cc}
                                 a & b \\
                                 c & d \\
                               \end{array}
                             \right)\in GL(\mathbb{F}_{2^n})$. Clearly, it is a faithful action. Then
                             $$\Bbb S=\bigcup_{\alpha \in I}^{\cdot}O_{\alpha}, \Omega=\{O_{\alpha}: \alpha\in I\},$$
where $\Omega$ is the set of all distinct orbits in $\Bbb S$ under the action of the projective linear group $PGL_2(\Bbb F_{2^n})$.  In fact,   $\Omega $ is a partition of $\Bbb S$. We shall calculate the size of the set  $I$.

It is clear that $$|PGL_2(\Bbb F_{2^n})|=\frac{|GL_2(\Bbb F_{2^n})|}{|\{aE_2: a\in \Bbb F_{2^n}^*\}|}=2^{3n}-2^n.$$
 For $\alpha\in I$,
let $H_{\alpha}$ denote the stabilizer of $\alpha$ under the action of  $PGL_2(\mathbb{F}_{2^n})$, then $$H_{\alpha}=\left\{\widetilde M(\alpha)=\alpha: \widetilde M\in PG L_2(\Bbb F_{2^n})\right\}=\left\{\widetilde {E_2}\right\}$$ and $$|O_{\alpha}|=\frac{|PGL_2(\Bbb F_{2^n})|}{|H_{\alpha}|}=2^{3n}-2^n.$$ By $|\Bbb S|=2^{6n}-2^{2n}-2^{3n}+2^n$,
$$|I|=\frac{|\Bbb S|}{|O_{\alpha}|}=2^{3n}+2^{n}-1. $$

 Moreover, we investigate the structure of each orbit $O_{\alpha}$. Note that the projective linear group $PGL_2(\Bbb F_{2^n})$ acts transitively on the orbit $O_{\alpha}$.

Since there is an injective homomorphism of two groups:
\begin{eqnarray*}AGL_2(\Bbb F_{2^n})&\rightarrow& PGL_2(\Bbb F_{2^n})\\
M=\left(\begin{array}{cc}e &f \\ 0 & 1\end{array}\right)&\mapsto & \widetilde M, \end{eqnarray*}
the affine group $AGL_2(\Bbb F_{2^n})$ is viewed as the subgroup of $PGL_2(\Bbb F_{2^n})$.

 Hence there is an action of $AGL_2(\Bbb F_{2^n})$ on $O_{\alpha}$:
 \begin{eqnarray*} AGL_2(\mathbb{F}_{2^n}) \times O_{\alpha} &\rightarrow& O_{\alpha}\\
  (M, \beta) &\mapsto&  e\beta+f,\end{eqnarray*}
 where $M=\left(\begin{array}{cc}e &f \\ 0 & 1\end{array}\right)\in AGL_2(\Bbb F_{2^n})$,  $\beta=\frac{a\alpha+b}{c\alpha+d}\in O_{\alpha}, a,b,c,d\in \mathbb{F}_{2^n},ad-bc\neq 0$. We shall investigate all distinct orbits in $O_{\alpha}$ under the action of the affine group $AGL_2(\Bbb F_{2^n})$.

For $\beta=\frac{a\alpha +b}{c\alpha +d}\in O_{\alpha}$, the orbit of $\beta$ in $O_{\alpha}$ under the action of the affine group $AGL_2(\Bbb F_{2^n})$ is
$$A({\beta})=\left\{M(\beta)=e\beta+f: M=\left(\begin{array}{cc}e &f \\ 0 & 1\end{array}\right)\in AGL_2(\Bbb F_{2^n})\right\}.$$

The following lemma is from \cite{26,1}.

\begin{lem}
$$O_{\alpha}=(\bigcup\limits_{\gamma\in \mathbb{F}_{2^n}}^{\cdot} A(\frac{1}{\alpha+\gamma}))\bigcup\limits ^{\cdot}A(\alpha),$$
which are disjoint unions. Moreover, there is a partition of $O_{\alpha}:$
\begin{equation} \label{3.4}\Delta_{\alpha}=\left\{A(\alpha), A(\frac 1{\alpha+\gamma}): \gamma\in \Bbb F_{2^n}\right\}.\end{equation}
\end{lem}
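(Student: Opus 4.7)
The plan is to use the fact that the affine group $AGL_2(\Bbb F_{2^n})$ is a subgroup of $PGL_2(\Bbb F_{2^n})$, so the $AGL_2$-orbits on $O_\alpha$ form a partition of $O_\alpha$. It suffices to identify a system of representatives and to verify that no two of the listed sets coincide.

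First I would show that every $\beta \in O_\alpha$ lies in one of the listed $AGL_2$-orbits. Write $\beta = \frac{a\alpha+b}{c\alpha+d}$ with $ad-bc\neq 0$. If $c=0$, then $\beta = (a/d)\alpha + (b/d)$, so $\beta \in A(\alpha)$. If $c\neq 0$, the division identity
\begin{equation*}
\frac{a\alpha+b}{c\alpha+d} \;=\; \frac{a}{c} + \frac{bc-ad}{c^2}\cdot\frac{1}{\alpha+d/c}
\end{equation*}
expresses $\beta$ as an affine transformation of $\frac{1}{\alpha+\gamma}$ with $\gamma = d/c \in \Bbb F_{2^n}$ (and nonzero scaling since $bc-ad\neq 0$). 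Thus $\beta \in A(\tfrac{1}{\alpha+\gamma})$. Conversely, $\alpha$ and each $\tfrac{1}{\alpha+\gamma}$ clearly lie in $O_\alpha$, so their $AGL_2$-orbits are contained in $O_\alpha$. This establishes the union.

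The crux is disjointness, which is where the hypothesis $\alpha \in \Bbb S$ (degree $6$ over $\Bbb F_{2^n}$) enters. Suppose $A(\alpha) \cap A(\tfrac{1}{\alpha+\gamma}) \neq \emptyset$; then $\frac{1}{\alpha+\gamma} = e\alpha+f$ for some $e \in \Bbb F_{2^n}^*$, $f \in \Bbb F_{2^n}$, yielding $e\alpha^2 + (e\gamma+f)\alpha + (f\gamma-1)=0$, a nontrivial polynomial relation of degree $\leq 2$ over $\Bbb F_{2^n}$, contradicting $[\Bbb F_{2^n}(\alpha):\Bbb F_{2^n}]=6$. Similarly, if $A(\tfrac{1}{\alpha+\gamma_1}) \cap A(\tfrac{1}{\alpha+\gamma_2}) \neq \emptyset$ with $\gamma_1 \neq \gamma_2$, then $\tfrac{1}{\alpha+\gamma_1} = e\cdot\tfrac{1}{\alpha+\gamma_2}+f$ for some $e\neq 0$, and clearing denominators gives a polynomial relation
\begin{equation*}
f(\alpha+\gamma_1)(\alpha+\gamma_2) + e(\alpha+\gamma_1) - (\alpha+\gamma_2) = 0
\end{equation*}
of degree at most $2$ in $\alpha$ over $\Bbb F_{2^n}$. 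Again by the degree $6$ hypothesis this polynomial must be identically zero, forcing $f=0$ and then $e=1,\ \gamma_1 = \gamma_2$, a contradiction.

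The main obstacle is simply making the disjointness argument robust: one must check that the relations derived are genuinely nontrivial (i.e.\ nonzero as polynomials in the indeterminate $x$ over $\Bbb F_{2^n}$) before invoking the degree $6$ property. Once the case analysis of the coefficients is handled carefully, the degree bound gives the conclusion immediately, and the partition $\Delta_\alpha$ in (\ref{3.4}) follows.
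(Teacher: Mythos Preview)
Your argument is correct. The paper itself does not supply a proof of this lemma; it simply cites \cite{26,1} and states the result. Your write-up is precisely the standard argument one finds in those references: cover $O_\alpha$ via the coset decomposition of $PGL_2$ modulo $AGL_2$ (the division identity with $\gamma=d/c$), and rule out coincidences among the listed $AGL_2$-orbits by observing that any overlap produces a nontrivial polynomial relation of degree at most $2$ satisfied by $\alpha$ over $\Bbb F_{2^n}$, contradicting $[\Bbb F_{2^n}(\alpha):\Bbb F_{2^n}]=6$. There is nothing to compare against in the paper itself, and your proof is complete as written.
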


\subsection{$G \times \Omega \rightarrow \Omega$ }
Let $G=Gal(\Bbb F_{2^{6n}}/\Bbb F_2)=\langle \sigma\rangle$ be the Galois group of $\Bbb F_{2^{6n}}$ over $\Bbb F_2$ and $\Omega$ defined as (3.3).
Now consider the group $G$ acting on the set $\Omega$:
 \begin{eqnarray*}\varphi: G \times \Omega &\rightarrow& \Omega\\
  (\sigma^i, O_{\alpha}) &\mapsto& \sigma^i( O_{\alpha})=O_{\sigma^i (\alpha)}.\end{eqnarray*}
Now we shall  count the number of distinct orbits in $\Omega$ under the action of $G$.   By Lemma \ref{C}, we need to calculate that for $\sigma^i\in G$, $0\le i\le 6n-1$,
\begin{equation}\label{3.6} |F(\sigma^i)|=\left|\left\{O_{\alpha}\in \Omega: \sigma^i(O_{\alpha})=O_{\alpha}\right\}\right|. \end{equation}

\begin{rem}\label{725r}
(1) In fact, for $0 \le i\le 6n-1$,   let $d=\gcd(6n, i)$, then $\langle \sigma^i\rangle=\langle \sigma^d\rangle$. Thus,
$$\{O_{\alpha}\in\Omega:\sigma^i(O_{\alpha})=O_{\alpha}\}
=\{O_{\alpha}\in\Omega:\sigma^d(O_{\alpha})=O_{\alpha}\}$$
and $$|F(\sigma^i)|=|F(\sigma^d)|.$$

(2) Let $o(\sigma^i)$ denote the order of $\sigma^i$ in $G$, then $o(\sigma^i)\mid|G|$. By $|G|=6n$ and $n$ a prime,  $$o(\sigma^i)\in\{1, 2, 3, 6, n, 2n, 3n, 6n\}.$$
Suppose that $o(\sigma^i)\in\{6n, 3n\}$ and $\sigma^i(O_{\alpha})=O_{\alpha}$. Then $H=\langle \sigma^i\rangle=\langle \sigma^k\rangle$, $k\in \{1,2\}$,   and $\sigma^k(O_{\alpha})=O_{\alpha}$.
 Hence $\frac{a\alpha+b}{c\alpha +d}=\alpha^{2^k}$, which is impossible by $\alpha$ of degree $6$ over $\Bbb F_{2^n}$.
Therefore $$|F(\sigma^i)|=0\mbox{ if }o(\sigma^i)\in\{6n, 3n\}.$$

In the next section, we shall discuss other cases.
\end{rem}

\section{The action of $G$ on $\Omega$}
\begin{prop}\label{7254.1} Let $ G=Gal(\Bbb F_{2^{6n}}/\Bbb F_{2})=\langle \sigma\rangle $.

(1) If $o(\sigma^i)=1$, then  $|F(\sigma^i)|=|\Omega| =2^{3n}+2^n-1$.

(2) If $o(\sigma^i)=2$, then $|F(\sigma^{i})|=2^{2n}-1$.
\end{prop}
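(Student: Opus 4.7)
Part (1) is immediate: if $o(\sigma^i)=1$ then $\sigma^i$ is the identity of $G$, which fixes every element of $\Omega$, so $|F(\sigma^i)|=|\Omega|=|I|=2^{3n}+2^n-1$ by the computation in Subsection~\ref{sub3.2}. For Part (2), Remark~\ref{725r}(1) reduces the task to computing $|F(\sigma^{3n})|$, since $o(\sigma^i)=2$ forces $\gcd(i,6n)=3n$. My plan is to count the set $\mathcal E:=\{\alpha\in\mathbb S:\sigma^{3n}(\alpha)\in O_\alpha\}$ and then divide by the common orbit size $|O_\alpha|=2^{3n}-2^n$. Since $PGL_2(\Bbb F_{2^n})$ acts freely on $\mathbb S$ (Subsection~\ref{sub3.2}), each $\alpha\in\mathcal E$ determines a unique $\widetilde M\in PGL_2(\Bbb F_{2^n})$ with $\sigma^{3n}(\alpha)=M(\alpha)$.

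Applying $\sigma^{3n}$ to this identity and using that $M$ has entries in $\Bbb F_{2^n}\subseteq\Bbb F_{2^{3n}}$ (the fixed field of $\sigma^{3n}$) yields $\alpha=M^2(\alpha)$, which by freeness forces $\widetilde M^2=\widetilde{E_2}$; moreover $\widetilde M\neq\widetilde{E_2}$ because $\alpha\notin\Bbb F_{2^{3n}}$. Thus $\widetilde M$ is a nontrivial involution in $PGL_2(\Bbb F_{2^n})$. In characteristic $2$, Cayley--Hamilton shows that a non-scalar $M$ satisfies $\widetilde M^2=\widetilde{E_2}$ iff $\mathrm{tr}(M)=0$; a direct count then gives exactly $2^{2n}-1$ such involutions, which I split into the affine family $\widetilde M(x)=x+\lambda$ with $\lambda\in\Bbb F_{2^n}^*$ ($2^n-1$ elements, the case $c=0$) and the non-affine family represented by $M=\left(\begin{array}{cc}a&b\\1&a\end{array}\right)$ with $b\neq a^2$ ($2^{2n}-2^n$ elements, the case $c\neq 0$).

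For each involution I will show that the number of $\alpha\in\mathbb S$ with $\sigma^{3n}(\alpha)=M(\alpha)$ is exactly $2^{3n}-2^n$. In the affine case the equation is Artin--Schreier, $\alpha^{2^{3n}}+\alpha=\lambda$; the $\Bbb F_2$-linear map $\alpha\mapsto\alpha^{2^{3n}}+\alpha$ on $\Bbb F_{2^{6n}}$ has kernel $\Bbb F_{2^{3n}}$ and image $\Bbb F_{2^{3n}}$, so each fiber has size $2^{3n}$; restricted to $\Bbb F_{2^{2n}}$ the map becomes $\mathrm{Tr}_{\Bbb F_{2^{2n}}/\Bbb F_{2^n}}$, yielding $2^n$ solutions in $\Bbb F_{2^{2n}}$ and none in $\Bbb F_{2^{3n}}$ (since $\lambda\neq 0$), leaving $2^{3n}-2^n$ in $\mathbb S$. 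In the non-affine case the equation becomes $f(x):=x^{2^{3n}+1}+ax^{2^{3n}}+ax+b=0$; one checks that $f'(x)=x^{2^{3n}}+a$ has a unique root at which $f$ evaluates to $a^2+b\neq 0$, so $f$ is separable of degree $2^{3n}+1$, and each root lies in $\Bbb F_{2^{6n}}$ because $\widetilde M^2=\widetilde{E_2}$. A subfield count then gives one root in $\Bbb F_{2^n}$ (namely $\alpha=\sqrt b$), $2^n+1$ in $\Bbb F_{2^{2n}}$ (via the norm equation $\gamma^{2^n+1}=a^2+b$ for $\gamma=\alpha+a$), and none in $\Bbb F_{2^{3n}}\setminus\Bbb F_{2^n}$; inclusion--exclusion leaves $(2^{3n}+1)-(2^n+1)=2^{3n}-2^n$ in $\mathbb S$. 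Summing gives $|\mathcal E|=(2^{2n}-1)(2^{3n}-2^n)$, hence $|F(\sigma^{3n})|=2^{2n}-1$.

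The main obstacle is the non-affine case: one must verify separability of $f$ and then correctly account for the roots in each proper subfield, which relies on rewriting the equation as a norm equation over $\Bbb F_{2^{2n}}/\Bbb F_{2^n}$ and using surjectivity of that norm map with kernel of size $2^n+1$.
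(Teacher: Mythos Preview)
Your proof is correct, but the route is genuinely different from the paper's.

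The paper never touches the full involution classification in $PGL_2(\Bbb F_{2^n})$. Instead it exploits the refinement $\Delta_\alpha$ of each orbit $O_\alpha$ into $AGL_2(\Bbb F_{2^n})$-cosets: since $|\Delta_\alpha|=2^n+1$ is odd and $|H|=2$, some $A(\alpha)$ must be $\sigma^{3n}$-fixed, forcing $\alpha^{2^{3n}}=c_1\alpha+c_2$ with $c_1=1$, and after rescaling one lands on the single equation $x^{2^{3n}}+x+1=0$. The paper then shows this has $2^{3n}-2^n$ roots in $\mathbb S$, that each fixed $O_\alpha$ contains a \emph{unique} fixed affine coset, and that such a coset carries exactly $2^n$ of those roots; dividing gives $2^{2n}-1$.

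You bypass $\Delta_\alpha$ entirely, instead parametrizing $\mathcal E$ by the involution $\widetilde M$ attached to each $\alpha$ via freeness. This forces you to treat two families of equations (Artin--Schreier for affine involutions, the degree-$(2^{3n}{+}1)$ norm-type equation for non-affine ones) rather than one, and the non-affine count via $\gamma^{2^n+1}=a^2+b$ is where the work concentrates. The payoff is conceptual uniformity: your argument is a direct instance of the general principle that $|F(\sigma^k)|=|\mathcal E_k|/|O_\alpha|$ with $\mathcal E_k$ fibered over $\{\widetilde M:\widetilde M^{\,o(\sigma^k)}=\widetilde{E_2}\}$, and it does not rely on the parity trick $2\nmid(2^n+1)$ that makes the paper's reduction to a single affine coset possible. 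The paper's method is shorter here precisely because that parity trick is available; your method would transfer more uniformly to orders where such a fixed-coset argument is unavailable.
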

\begin{proof}
(1)  If $o(\sigma^i)=1$, then for all $O_{\alpha}\in \Omega$,
 $\sigma^{i}(O_{\alpha})=O_{\alpha}$ and  $|F(\sigma^i)|=|\Omega| =2^{3n}+2^n-1$.

(2)
By $o(\sigma^i)=2$, $H=\langle \sigma^i\rangle =\langle \sigma^{3n}\rangle $.
Suppose that  $\sigma^{3n}(O_{\alpha})=O_{\alpha}$ for $O_{\alpha}\in \Omega$. Then   the group $H$ acts on the partition $\Delta_{\alpha}$ of $O_{\alpha}$ in  (\ref{3.4}) as follows:
  \begin{eqnarray*} H\times \Delta_{\alpha}&\longrightarrow& \Delta_{\alpha}\\
(\sigma^{3ni},A(\alpha))&\longmapsto& \sigma^{3ni}(A(\alpha)),i=0,1.\end{eqnarray*} Denote by $O_{A(\alpha)}$ the orbit of $A(\alpha)$ under the action of the subgroup  $H=\langle\sigma^{3n}\rangle$. Then there is a class equation:
 $$2^n+1=|\Delta_{\alpha}|=\sum |O_{A(\alpha)}|.$$
 By $|H|=2$,  there is $A(\alpha)\in \Delta_{\alpha}$ such that $|O_{A(\alpha)}|=1$, i.e.,
   $\sigma^{3n}(A(\alpha))=A(\alpha)$.
Hence
$$\sigma^{3n}(\alpha)=\alpha^{2^{3n}}=c_1\alpha+c_2,$$
 with $c_1\in \mathbb{F}_{2^n}^*, c_2 \in \mathbb{F}_{2^n}.$ Then $\alpha=\sigma^{6n}(\alpha)=c_1\sigma^{3n}(\alpha)+c_2=c_1^2\alpha+c_1c_2+c_2$. Thus $c_1=1$
and $$\alpha^{2^{3n}}=\alpha+c_2$$ with $c_2\neq 0$ (if $c_2= 0$, then $\alpha\in \Bbb F_{2^{3n}}$ which is contradictory.). Consequently, $$(c_2^{-1}\alpha)^{2^{3n}}+c_2^{-1}\alpha+1=0.$$
If $c_2^{-1}\alpha$ is viewed as $\alpha$
 then  $\alpha$ satisfies the equation: $$x^{2^{3n}}+x+1=0.$$
Hence there is a factorization:
 $$x^{2^{3n}}+x+1=\prod\limits_{c\in \mathbb{F}_{2^{3n}}}(x+\alpha+c). $$
It is clear  that $\alpha^{2^{6n}}=\alpha$  and $\alpha^{2^{3n}}=\alpha+1$, so all roots  $\alpha+c\in \Bbb F_{2^{6n}} \setminus\Bbb F_{2^{3n}}$, $c\in\Bbb F_{2^{3n}}$.

Moreover, if $\alpha$ is a root of the polynomial $x^{2^{3n}}+x+1$ and $\alpha\in \Bbb F_{2^{2n}}$,  then $\alpha$ is a root of the polynomial $x^{2^n}+x+1$ and there is a factorization:
$$x^{2^n}+x+1=\prod_{c\in\Bbb F_{2^n}}(x+\alpha+c)$$
and all roots $\alpha+c\in\Bbb F_{2^{2n}}$, $c\in \Bbb F_{2^n}$.

Hence there are  $2^{3n}-2^n$ roots of $x^{2^{3n}}+x+1$ in $\Bbb S$.

Conversely, if $\alpha\in \Bbb S$ is a root of $x^{2^{3n}}+x+1$, then $\sigma^{3n}(A(\alpha))=A(\alpha)$.

Therefore $\alpha\in \Bbb S$ is a root of $x^{2^{3n}}+x+1=0$ if and only if $\sigma^{3n}(A(\alpha))=A(\alpha)$; moreover, $A(\alpha)$ has $2^n$ roots $\alpha+c, c\in \mathbb{F}_{2^n}$, of the polynomial $x^{2^{3n}}+x+1$.

  Suppose that $\sigma^{3n} (A(\alpha))=A(\alpha)$, $\alpha^{2^{3n}}+\alpha+1=0$, and $\sigma^{3n}(A(\frac{1}{\alpha+\gamma}))=A(\frac{1}{\alpha+\gamma})$. Then
    $$(\frac 1 {\alpha+\gamma})^{2^{3n}}=\frac 1{\alpha+1+\gamma
  }=\frac {b_1}{\alpha +\gamma}+b_2, 0\ne b_1, b_2\in \Bbb F_{2^n},$$
 which is  contradictory with  $\alpha$  of degree $6$ over $\mathbb{F}_{2^n}$.

 In conclusion, $\sigma^{3n}(O_{\alpha})=O_{\alpha}$ if and only if $\sigma^{3n}(A(\alpha))=A(\alpha)$ if and only if there are $2^n$ roots of $x^{2^{3n}}+x+1$ in $O_{\alpha}$.
Then $$|F(\sigma^{3n})|=\frac{2^{3n}-2^n}{2^n}=2^{2n}-1.$$
\end{proof}

\begin{lem} \label{4.20} $$\left| \left\{\beta\in \mathbb{S}: \beta^{2^{2n}}+\frac{1}{\beta}+1=0 \right\}\right|=2^{2n}-2^n-2.$$
\end{lem}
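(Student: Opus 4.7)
The plan is to first note that any $\beta\in\mathbb{S}$ satisfying $\beta^{2^{2n}}+1/\beta+1=0$ has $\beta\neq 0$, so multiplying through by $\beta$ gives the equivalent polynomial equation $P(\beta)=0$, where $P(x):=x^{q^2+1}+x+1$ and $q=2^n$. Since $P'(x)=x^{q^2}+1=(x+1)^{q^2}$ in characteristic $2$ and $P(1)=1\ne 0$, the polynomial $P$ is separable with exactly $q^2+1$ distinct roots in $\overline{\mathbb{F}}_2$.

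Next I would check that every root of $P$ lies in $\mathbb{F}_{q^6}=\mathbb{F}_{2^{6n}}$. A root satisfies $\beta^{q^2}=(\beta+1)/\beta$; iterating the $q^2$-power map and simplifying (using that the M\"obius transformation $T(x)=(x+1)/x$ has order $3$) yields $\beta^{q^4}=1/(\beta+1)$ and $\beta^{q^6}=\beta$. So all $q^2+1$ roots live in $\mathbb{F}_{q^6}$, and the task reduces to counting those outside $\mathbb{F}_{q^2}\cup\mathbb{F}_{q^3}$.

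The count in $\mathbb{F}_{q^2}$ is easy: there $\beta^{q^2}=\beta$, so $P(\beta)=0$ becomes $\beta^2+\beta+1=0$, giving exactly the two elements of $\mathbb{F}_4\subset\mathbb{F}_{q^2}$. The main step is the count in $\mathbb{F}_{q^3}$. Here I would apply the $q$-power Frobenius to the identity $\beta^{q^2}=(\beta+1)/\beta$, using $\beta^{q^3}=\beta$, to obtain the equivalent equation $\beta^{q+1}+\beta^q+1=0$; conversely, given a solution $\beta$ of this latter equation, two further applications of the $q$-power Frobenius force $\beta^{q^3}=\beta$ and recover $P(\beta)=0$. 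Thus the roots of $P$ in $\mathbb{F}_{q^3}$ are exactly the roots of $Q(x):=x^{q+1}+x^q+1$. Since $Q'(x)=x^q$ in characteristic $2$ and $Q(0)=1\ne 0$, $Q$ is separable of degree $q+1$, contributing exactly $q+1$ roots.

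Finally, $\mathbb{F}_{q^2}\cap\mathbb{F}_{q^3}=\mathbb{F}_q=\mathbb{F}_{2^n}$ contains no root of $P$, since any such root would satisfy $\beta^2+\beta+1=0$, but $\mathbb{F}_4\not\subseteq\mathbb{F}_{2^n}$ because $n$ is odd. Inclusion--exclusion therefore gives $|\{\beta\in\mathbb{S}:P(\beta)=0\}|=(q^2+1)-2-(q+1)+0=q^2-q-2=2^{2n}-2^n-2$. The main obstacle is the derivation and analysis of $Q$: the observation that a degree-$(q+1)$ polynomial cleanly captures all $\mathbb{F}_{q^3}$-roots of the much larger degree-$(q^2+1)$ polynomial $P$ is the crux of the argument.
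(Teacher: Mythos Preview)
Your proof is correct and follows essentially the same approach as the paper: both arguments show all roots of $P$ lie in $\mathbb{F}_{q^6}$ via the order-$3$ M\"obius map $x\mapsto (x+1)/x$, reduce the $\mathbb{F}_{q^3}$-count to the same degree-$(q+1)$ polynomial (the paper writes it as $\beta^{2^n}(\beta+1)+1=0$, which is your $Q$), and subtract the $\mathbb{F}_{q^2}$- and $\mathbb{F}_{q^3}$-counts from $q^2+1$. Your treatment is in fact slightly more careful, since you explicitly verify separability of $P$ and $Q$ and note that the overlap $\mathbb{F}_{q^2}\cap\mathbb{F}_{q^3}=\mathbb{F}_q$ contributes no roots, points the paper leaves implicit.
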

\begin{proof} Now we investigate  roots  of the following equation
 \begin{equation}\label{4.11} x^{2^{2n}}+\frac1x +1=0.\end{equation}

Suppose that $\beta$ is a root in (4.1) and \begin{equation}B=\left(\begin{array}{ll}1&1\\1&0\end{array}\right).\end{equation} Then $\beta^{2^{2n}}=B(\beta)$. By $o(B)=3$,  $\beta^{2^{6n}}=B^3(\beta)=\beta$ and $\beta\in \Bbb F_{2^{6n}}$. In the following, we shall find all roots $\beta$ in (\ref{4.11}) such that $\beta\in \Bbb F_{2^{2n}}\cup \Bbb F_{2^{3n}}$.

It is clear that
   $\beta\in \mathbb{F}_{2^{2n}}$ if and only if $\beta=\frac{1}{\beta}+1$, i.e., $o(\beta)=3$ and $\beta\in \Bbb F_{4}$. Then
$$\left| \left\{\beta\in \mathbb{F}_{2^{2n}}:\beta^{2^{2n}}+\frac{1}{\beta}+1=0 \right\}\right|=\left|\left\{\beta\in \mathbb{F}_{2^{2n}}:\beta+\frac{1}{\beta}+1=0 \right\}\right|=2.$$

If
   $\beta\in \mathbb{F}_{2^{3n}}$, then  $\beta^{2^n}+\frac{1}{\beta+1}=0$.
 Conversely, if $\beta^{2^n}=\left(\begin{array}{ll}0 &1\\1&1\end{array}\right) (\beta)$, then $\beta^{2^{3n}}=\left(\begin{array}{ll}0 &1\\1&1\end{array}\right)^3(\beta)=\beta$ and $\beta\in\Bbb F_{2^{3n}}$.
    Hence
$$\left|\left\{\beta\in \mathbb{F}_{2^{3n}}:\beta^{2^{2n}}+\frac{1}{\beta}+1=0 \right\}\right|=\left|\left\{\beta\in \mathbb{F}_{2^{3n}}:\beta^{2^{n}}({\beta}+1)+1=0 \right\}\right|=2^n+1.$$

Therefore $$\left|\left\{\beta\in \mathbb{S}: \beta^{2^{2n}}+\frac{1}{\beta}+1=0 \right\}\right|=2^{2n}+1-2-(2^n+1)=2^{2n}-2^n-2.$$
\end{proof}
\begin{prop}\label{9231} Let $ G=Gal(\Bbb F_{2^{6n}}/\Bbb F_{2})=\langle \sigma\rangle $. If $o(\sigma^i)=3$, then  $|F(\sigma^{i})|=2^n-2$.
\end{prop}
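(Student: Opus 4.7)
Since the elements of order $3$ in $G$ are $\sigma^{2n}$ and $\sigma^{4n}$, both generating the same cyclic subgroup, Remark \ref{725r}(1) reduces the task to computing $|F(\sigma^{2n})|$. The plan is: first show that $\sigma^{2n}(O_\alpha)=O_\alpha$ forces $\alpha^{2^{2n}}=\widetilde{M}(\alpha)$ for some $\widetilde{M}\in PGL_2(\Bbb F_{2^n})$ of order exactly $3$; next show every such $\widetilde{M}$ is $PGL_2(\Bbb F_{2^n})$-conjugate to $\widetilde{B}$, where $B=\left(\begin{smallmatrix}1 & 1\\1 & 0\end{smallmatrix}\right)$; and finally relate $|F(\sigma^{2n})|$ to $|T_B|:=|\{\beta\in\Bbb S:\beta^{2^{2n}}+\beta^{-1}+1=0\}|$ from Lemma \ref{4.20} by a centralizer count.

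For the first step, $\sigma^{2n}(O_\alpha)=O_\alpha$ iff $\alpha^{2^{2n}}\in O_\alpha$, i.e.\ $\alpha^{2^{2n}}=\widetilde{M}(\alpha)$ for a unique $\widetilde{M}\in PGL_2(\Bbb F_{2^n})$ (unique by the triviality of the stabilizer of $\alpha\in\Bbb S$ established in Subsection \ref{sub3.2}). Iterating and using that the entries of $M$ are fixed by $\sigma^{2n}$ yields $\widetilde{M}^3(\alpha)=\alpha^{2^{6n}}=\alpha$, so $\widetilde{M}^3=\widetilde{E_2}$; the case $\widetilde{M}=\widetilde{E_2}$ would force $\alpha\in\Bbb F_{2^{2n}}$, contrary to $\alpha\in\Bbb S$, so $\widetilde{M}$ has order exactly $3$.

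For the second step, $3\nmid 2^n-1$ (as $n$ is an odd prime), so the cube map is a bijection on $\Bbb F_{2^n}^*$ and every order-$3$ element lifts to some $M\in GL_2(\Bbb F_{2^n})$ with $M^3=E_2$; then $M\ne E_2$ forces the minimal (and hence characteristic) polynomial of $M$ to equal $x^2+x+1$, which is irreducible over $\Bbb F_{2^n}$, so all such $M$ give $\Bbb F_{2^n}^2$ the same $\Bbb F_{2^n}[x]$-module structure and therefore form a single $GL_2(\Bbb F_{2^n})$-conjugacy class. A direct check gives $B^3=E_2$, placing $\widetilde{B}$ in this class. Consequently, for each $O_\alpha\in F(\sigma^{2n})$ one can choose $\widetilde{N}\in PGL_2(\Bbb F_{2^n})$ with $\widetilde{N}^{-1}\widetilde{M}\widetilde{N}=\widetilde{B}$, and then $\beta=\widetilde{N}^{-1}(\alpha)\in O_\alpha\cap T_B$; conversely every $\beta\in T_B$ has $O_\beta\in F(\sigma^{2n})$. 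For $\beta,\beta'\in O_\alpha\cap T_B$ with $\beta'=\widetilde{P}(\beta)$, the condition $\beta'\in T_B$ becomes $\widetilde{P}\widetilde{B}\widetilde{P}^{-1}=\widetilde{B}$, so $|O_\alpha\cap T_B|=|C_{PGL_2(\Bbb F_{2^n})}(\widetilde{B})|$ is independent of $O_\alpha$.

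For the final step, $C_{GL_2(\Bbb F_{2^n})}(B)=\Bbb F_{2^n}[B]^\times\cong\Bbb F_{2^{2n}}^\times$, so modding by the scalar center gives $|C_{PGL_2(\Bbb F_{2^n})}(\widetilde{B})|=(2^{2n}-1)/(2^n-1)=2^n+1$. Combined with $|T_B|=2^{2n}-2^n-2$ from Lemma \ref{4.20} and the factorization $2^{2n}-2^n-2=(2^n-2)(2^n+1)$, this yields $|F(\sigma^{2n})|=|T_B|/(2^n+1)=2^n-2$. The main obstacle is the single-conjugacy-class claim: the analogue of the argument in Proposition \ref{7254.1}(2) which isolates a fixed element inside $\Delta_\alpha$ via the class equation fails here because $3\mid|\Delta_\alpha|=2^n+1$ allows all orbits of $\langle\sigma^{2n}\rangle$ on $\Delta_\alpha$ to have length $3$, so one has to compare $\widetilde{M}$ to $\widetilde{B}$ globally via $PGL_2(\Bbb F_{2^n})$-conjugation rather than argue inside a single $\Delta_\alpha$.
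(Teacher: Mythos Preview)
Your proof is correct and follows a genuinely different route from the paper's. The paper argues inside the affine partition $\Delta_\alpha$: it first shows every $\langle\sigma^{2n}\rangle$-orbit on $\Delta_\alpha$ has length $3$ (ruling out length $1$ by the contradiction $\alpha\in\Bbb F_{2^{4n}}$), then normalizes so that $\sigma^{2n}(A(\alpha))=A(1/\alpha)$, which gives $\alpha^{2^{2n}}=B_0(\alpha)$ with $B_0=\left(\begin{smallmatrix}b&a\\1&0\end{smallmatrix}\right)$; the condition that $B_0^3$ be scalar forces $a=b^2$, and rescaling by $b$ lands $\alpha$ on equation~(4.1). A direct commutator computation with $P\in AGL_2(\Bbb F_{2^n})$ then shows each affine class meets $T_B$ in at most one point, yielding $|O_\alpha\cap T_B|=2^n+1$.

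You bypass $\Delta_\alpha$ altogether: the observation that $\widetilde{M}$ has order exactly $3$, together with the irreducibility of $x^2+x+1$ over $\Bbb F_{2^n}$ and rational canonical form, collapses all such $\widetilde{M}$ into the single conjugacy class of $\widetilde{B}$, and the count $|O_\alpha\cap T_B|=|C_{PGL_2(\Bbb F_{2^n})}(\widetilde{B})|=2^n+1$ drops out of the centralizer description $C_{GL_2}(B)=\Bbb F_{2^n}[B]^\times$. Your closing remark is exactly the point: the fixed-point trick of Proposition~\ref{7254.1}(2) is unavailable here since $3\mid 2^n+1$, and where the paper compensates by working with the length-$3$ orbits and an explicit affine normalization, you replace that local analysis by a global conjugacy argument. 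Both routes funnel through Lemma~\ref{4.20} identically; yours is more structural and would adapt readily to other prime orders, while the paper's keeps the argument tied to the affine decomposition used throughout.
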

\begin{proof}
By $o(\sigma^i)=3$, $H=\langle \sigma^i\rangle =\langle \sigma^{2n}\rangle $.
Suppose that  $\sigma^{2n}(O_{\alpha})=O_{\alpha}$ for $O_{\alpha}\in \Omega$. Then   the group $H$ acts on the partition $\Delta_{\alpha}$ of $O_{\alpha}$ in  (\ref{3.4}) as follows:
  \begin{eqnarray*} H\times \Delta_{\alpha}&\longrightarrow& \Delta_{\alpha}\\
(\sigma^{2ni},A(\alpha))&\longmapsto& \sigma^{2ni}(A(\alpha)),i=0,1,2.\end{eqnarray*} Denote by $O_{A(\alpha)}$ the orbit of $A(\alpha)$ under the action the subgroup  $H=\langle\sigma^{2n}\rangle$.
 For $A(\alpha)\in \Delta_{\alpha}$,   $|O_{A(\alpha)}|=1$ or $3$ by $|H|=3$.

Suppose that there is $A(\alpha)\in \Delta_{\alpha}$ such that  $| O_{A(\alpha)}| =1$. Then $\sigma^{2n}(A(\alpha))=A(\alpha)$. By the proof of Proposition \ref{7254.1},
 $\alpha$ is a root of $x^{2^{2n}}+x+1=0$ and  $\alpha^{2^{4n}}=\alpha$,  which is a contradiction with $\alpha$  of degree $6$ over $\mathbb{F}_{2^n}$.  Hence  $| O_{A(\alpha)}| = 3$ for all $A(\alpha)\in \Delta_{\alpha}$.

 Without loss of generality, let $\sigma^{2n}(A(\alpha))=A(\frac 1{\alpha})$.
 Then
 $$\alpha^{2^{2n}}=\frac a {\alpha}+b=B_0(\alpha),\alpha^{2^{6n}}=B_0^3\alpha=\alpha,$$
where $B_0=\left(\begin{array}{cc} b & a\\ 1& 0\end{array}\right)\in  GL_2(\Bbb F_{2^n})$. Hence $B_0^3=\left(\begin{array}{cc} b^3 & ab^2+a^2\\ b^2+a& ab\end{array}\right)$ is a scalar matrix and $ a=b^2\in \Bbb F_{2^n}^*$. If $\frac {\alpha}{b}$ is viewed as $\alpha$,
then  $\alpha\in \Bbb S$ satisfies the equation (4.1). Hence
$\alpha$, $\alpha^{2^{2n}}=B(\alpha)=\frac 1{\alpha}+1$, $\alpha^{2^{4n}}=B^2(\alpha)=\frac 1{\alpha+1}$ are roots of (4.1), where $B$ is defined as (4.2). So
$\sigma^{2n}(A(\alpha))=A(\frac{1}{\alpha}+1)=A(\frac 1{\alpha})$, $\sigma^{2n}(A(\frac{1}{\alpha}))=A(\frac{1}{\alpha+1})$, and $\sigma^{2n}(A (\frac{1}{\alpha+1})) = A(\alpha)$.

Suppose that $\alpha\in A(\alpha)$ is a solution of (4.1) and   $P(\alpha)=c\alpha +d\in A(\alpha)$ is also a solution of  (\ref{4.11}), where $P=\left(\begin{array}{ll} c &d\\ 0&1\end{array}\right)$.  Then by $\alpha^{2^{2n}}=B(\alpha)$, $BP(\alpha)=(P(\alpha))^{2^{2n}}=P(\alpha)^{2^{2n}}=PB(\alpha)$ and
$$B^{-1}P^{-1}BP=\frac 1 a\left(\begin{array}{cc}c^2&cd\\c+c^2+cd&d+1+d^2+cd\end{array}\right)$$
is a scalar matrix, i.e., $c=1, d=0$. Hence there is a unique solution $\alpha\in A(\alpha)$ of (\ref{4.11}), so there are $2^n+1$ roots of $x^{2^{2n}}+x+1$ in $O_{\alpha}$.

Conversely, if $\alpha$ is a solution of (\ref{4.11}), then $\sigma^{2n}(O_{\alpha})=O_{\alpha}$.

Consequently, by Lemma \ref{4.20} $$|F(\sigma^{2n})|=\frac{2^{2n}-2^n-2}{2^n+1}=2^n-2.$$
\end{proof}

\begin{prop}\label{923}
 Let $ G=Gal(\Bbb F_{2^{6n}}/\Bbb F_{2})=\langle \sigma\rangle $. If $o(\sigma^i)=6$, then  $|F(\sigma^{i})|=0$.
\end{prop}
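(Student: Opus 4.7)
The plan is to assume for contradiction that $\sigma^i(O_{\alpha})=O_{\alpha}$ for some $O_{\alpha}\in\Omega$ with $o(\sigma^i)=6$; then $\langle \sigma^i\rangle=\langle \sigma^n\rangle$, so equivalently $\sigma^n(O_{\alpha})=O_{\alpha}$. First I would extract a matrix: since $\sigma^n(\alpha)\in O_{\alpha}$, there is $M=\left(\begin{array}{cc} a & b \\ c & d\end{array}\right)\in GL_2(\mathbb{F}_{2^n})$ with $\sigma^n(\alpha)=M(\alpha)=\frac{a\alpha+b}{c\alpha+d}$. Because the entries of $M$ lie in $\mathbb{F}_{2^n}$ and are therefore fixed by $\sigma^n$, a routine induction yields $\sigma^{kn}(\alpha)=M^k(\alpha)$ for every $k\ge 0$.

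Next I would pin down the order of $M$. Taking $k=6$ and using that the $PGL_2(\mathbb{F}_{2^n})$-stabiliser of $\alpha$ is trivial (Subsection \ref{sub3.2}), I would conclude $M^6=\widetilde{E_2}$ in $PGL_2(\mathbb{F}_{2^n})$. Taking $k\in\{1,2,3\}$ and using that $\alpha\in\mathbb{S}\subseteq \mathbb{F}_{2^{6n}}\setminus(\mathbb{F}_{2^{2n}}\cup\mathbb{F}_{2^{3n}})$, so $\alpha\notin\mathbb{F}_{2^{kn}}$ and hence $\sigma^{kn}(\alpha)\ne\alpha$, I would get $M^k\ne \widetilde{E_2}$. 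Thus the image of $M$ in $PGL_2(\mathbb{F}_{2^n})$ has order exactly $6$.

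The decisive step is then to show that $PGL_2(\mathbb{F}_{2^n})$ contains no element of order $6$ when $n>3$ is an odd prime. I would do this by lifting to $GL_2(\overline{\mathbb{F}_{2^n}})$ and classifying by Jordan type: if the two eigenvalues are distinct and lie in $\mathbb{F}_{2^n}$, the order in $PGL_2$ divides $2^n-1$; if they form a Galois-conjugate pair in $\mathbb{F}_{2^{2n}}\setminus \mathbb{F}_{2^n}$, their ratio has norm one, so the order divides $(2^{2n}-1)/(2^n-1)=2^n+1$; and if the eigenvalue is repeated but the lift is not scalar, the image in $PGL_2$ is a nontrivial unipotent element, which in characteristic $2$ has order $2$. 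Since $n$ is odd, both $2^n-1$ and $2^n+1$ are odd, so the only even-order elements of $PGL_2(\mathbb{F}_{2^n})$ are unipotents of order $2$; in particular no element has order $6$. This contradicts the previous step and forces $|F(\sigma^i)|=0$.

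The main obstacle is this last classification of element orders in $PGL_2$, which has to be handled carefully in characteristic $2$, where the unipotent class is the sole source of even-order elements; it is exactly the odd parity of $2^n\pm 1$ (forced by $n$ odd) that rules out order $6$. Once that is in hand, the triviality of the stabiliser of $\alpha$ and its degree $6$ over $\mathbb{F}_{2^n}$ do the remaining work essentially automatically, in contrast to the more delicate analyses of Propositions \ref{7254.1} and \ref{9231}, where fixed subblocks $A(\beta)\in\Delta_{\alpha}$ actually exist and have to be counted.
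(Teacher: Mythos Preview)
Your argument is correct and takes a genuinely different route from the paper's proof. The paper proceeds by letting $H=\langle\sigma^n\rangle$ act on the partition $\Delta_{\alpha}$ of (\ref{3.4}), then rules out affine orbits of length $1$ and $3$ by explicit matrix computations (showing each forces $\alpha\in\mathbb{F}_{2^{2n}}$ or $\alpha\in\mathbb{F}_{2^{3n}}$), and finally uses the parity of $|\Delta_{\alpha}|=2^n+1$ to exclude the remaining case where all orbit lengths lie in $\{2,6\}$.

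Your approach bypasses $\Delta_{\alpha}$ entirely: you extract a single $\widetilde{M}\in PGL_2(\mathbb{F}_{2^n})$ realising $\sigma^n$ on $O_{\alpha}$, pin down its projective order as exactly $6$ via the trivial stabiliser and the degree of $\alpha$, and then invoke the eigenvalue trichotomy for $2\times 2$ matrices to show that $PGL_2(\mathbb{F}_{2^n})$ simply has no element of order $6$. This is cleaner and more conceptual; it also explains \emph{why} the case $o(\sigma^i)=6$ behaves so differently from the cases treated in Propositions \ref{7254.1} and \ref{9231}, where the relevant projective orders ($2$ and $3$) do occur in $PGL_2(\mathbb{F}_{2^n})$. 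The paper's method, on the other hand, stays uniform with the surrounding propositions and avoids appealing to Jordan forms over the algebraic closure. One cosmetic remark: the oddness of $2^n\pm 1$ holds for every $n\ge 1$ (since $2^n$ is even), not merely for odd $n$, so that clause can be simplified.
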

\begin{proof} By $o(\sigma^i)=6$, $H=\langle \sigma^i\rangle =\langle \sigma^{n}\rangle $.
Suppose that  $\sigma^{n}(O_{\alpha})=O_{\alpha}$ for $O_{\alpha}\in \Omega$. Then   the group $H$ acts on the partition $\Delta_{\alpha}$ of $O_{\alpha}$ in  (\ref{3.4}) as follows:
  \begin{eqnarray*} H\times \Delta_{\alpha}&\longrightarrow& \Delta_{\alpha}\\
(\sigma^{ni},A(\alpha))&\longmapsto& \sigma^{ni}(A(\alpha)),i=0,1,\ldots,5.\end{eqnarray*} Denote by $O_{A(\alpha)}$ the orbit of $A(\alpha)$ under the action the subgroup  $H=\langle\sigma^{n}\rangle$. Then there is a class equation:
 \begin{equation} \label{4.31}2^n+1=|\Delta_{\alpha}|=\sum |O_{A(\alpha)}|.\end{equation}
For $A(\alpha)\in \Delta_{\alpha}$,  $|O_{A(\alpha)}|\in \{1, 2, 3, 6\}$ by $|H|=6$.

Suppose that there is $A(\alpha)\in O_{\alpha}$ such that  $|O_{A(\alpha)}|=1$. Then $\sigma^n(A(\alpha))=A(\alpha)$ and $\alpha^{2^n}=a\alpha+b$.  Hence $\alpha=\alpha^{2^{6n}}=a^6\alpha$, $a=1$, and $\alpha^{2^{2n}}=\alpha$, which is contradictory.

Suppose that there is $A(\alpha)\in O_{\alpha}$ such that  $|O_{A(\alpha)}|=3$. Without loss of generality, let $\sigma^{n}(A(\alpha))=A(\frac 1{\alpha})$.
Then
 $$\alpha^{2^{n}}=\frac a {\alpha}+b=B_0(\alpha),\alpha^{2^{6n}}=B_0^6\alpha=\alpha,$$
where $B_0=\left(\begin{array}{cc} b & a\\ 1& 0\end{array}\right)\in  GL_2(\Bbb F_{2^n})$. Hence $$B_0^6=\left(\begin{array}{cc} b^6+(ab^2+a^2)(b^2+a) & (ab^2+a^2)(b^3+ab)\\ (b^2+a)(b^3+ab)& (b^2+a)(ab^2+a^2)+a^2b^2\end{array}\right)$$ is a scalar matrix and $ a=b^2\in \Bbb F_{2^n}^*$ or $b=0$.

If $b=0$, then $B_0^2=\left(\begin{array}{cc} a & 0\\ 0& a \end {array}\right)$ and $\alpha\in \Bbb F_{2^{2n}}$, which is a contradiction.

If $a=b^2\in \Bbb F_{2^n}^*$, then $(\frac{\alpha}{b})^{2^{n}}=\frac{b}{\alpha}+1=
B(\frac{\alpha}{b})$ with $B=\left(\begin{array}{cc}1 & 1 \\1 & 0 \\\end{array} \right)$.
Since $(\frac{\alpha}b)^{2^{3n}}= B^3(\frac{\alpha}b)=\frac{\alpha}b$,  $\frac{\alpha}{b}\in \Bbb F_{2^{3n}}$, which is a contradiction.

  Hence $| O_{A(\alpha)}| $ can not be $1$ and $3$.

  Suppose that $|O_{A(\alpha)}| =2$ or $6$ for all $A(\alpha)\in O_{\alpha}$.  Then it is a contradiction in (\ref{4.31}).

 Therefore, $$|F(\sigma^{n})|=0.$$
\end{proof}
\begin{lem}\label{4.51} If $\alpha\in \Bbb S$ and $\alpha^{2^6}=\alpha+b$, $b\in \Bbb F_{2^n}$. Then there is an element $c\in\Bbb F_{2^n}$ such that $(\alpha+c)^{2^6}=(\alpha+c)$.
\end{lem}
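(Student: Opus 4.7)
The plan is to reduce the statement to solving $c^{2^6}+c=b$ for some $c\in\Bbb F_{2^n}$; if such a $c$ exists then in characteristic $2$
$$(\alpha+c)^{2^6}=\alpha^{2^6}+c^{2^6}=(\alpha+b)+(c+b)=\alpha+c,$$
which is the desired conclusion. Equivalently, I need to show that $b$ lies in the image of the $\Bbb F_2$-linear map $\phi:\Bbb F_{2^n}\to\Bbb F_{2^n}$, $c\mapsto c^{2^6}+c$.

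First I would iterate the hypothesis: an easy induction on $k$ gives
$$\alpha^{2^{6k}}=\alpha+\sum_{j=0}^{k-1}b^{2^{6j}}$$
for every $k\ge 1$. Since $\alpha\in\Bbb F_{2^{6n}}$, the case $k=n$ forces $\sum_{j=0}^{n-1}b^{2^{6j}}=0$.

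Next, because $n>3$ is prime, $\gcd(n,6)=1$, so the map $x\mapsto x^{2^6}$ is a generator of $\Gal(\Bbb F_{2^n}/\Bbb F_2)$. Consequently the set $\{b^{2^{6j}}:0\le j\le n-1\}$ coincides with the full Frobenius orbit $\{b^{2^{j}}:0\le j\le n-1\}$ of $b$ inside $\Bbb F_{2^n}$, and the sum above equals $\Tr_{\Bbb F_{2^n}/\Bbb F_2}(b)$. Hence $\Tr_{\Bbb F_{2^n}/\Bbb F_2}(b)=0$.

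Finally, I would analyze $\phi$ by a dimension count. Its kernel is $\Bbb F_{2^n}\cap\Bbb F_{2^6}=\Bbb F_{2^{\gcd(n,6)}}=\Bbb F_2$, so $\dim_{\Bbb F_2}\im(\phi)=n-1$. Since $\Tr$ commutes with the Frobenius, $\Tr(\phi(c))=\Tr(c)+\Tr(c)=0$, so $\im(\phi)\subseteq\ker\Tr$; but $\ker\Tr$ also has $\Bbb F_2$-dimension $n-1$, and thus equality holds. By the previous paragraph $b\in\ker\Tr=\im(\phi)$, yielding the required $c\in\Bbb F_{2^n}$. The argument is entirely routine linear algebra over $\Bbb F_2$; the only place needing attention is the joint use of $\gcd(n,6)=1$ in the second and third steps, which is exactly where the primality of $n>3$ enters.
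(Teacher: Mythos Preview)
Your argument is correct and follows essentially the same route as the paper: both proofs reduce to showing $b\in\{c^{2^6}+c:c\in\Bbb F_{2^n}\}$, deduce $\Tr_{\Bbb F_{2^n}/\Bbb F_2}(b)=0$ from the relation $b=\alpha^{2^6}-\alpha$ together with $\gcd(n,6)=1$, and finish by the dimension count showing that the image of $c\mapsto c^{2^6}+c$ coincides with the kernel of the trace. The only cosmetic difference is that the paper phrases the vanishing via the relative trace $T^{6n}_6:\Bbb F_{2^{6n}}\to\Bbb F_{2^6}$ rather than by your explicit iteration $\alpha^{2^{6k}}=\alpha+\sum_{j=0}^{k-1}b^{2^{6j}}$, but this is the same computation.
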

\begin{proof} Let $T^{6n}_{6}: \Bbb F_{2^{6n}}\rightarrow \Bbb F_{2^6}$ and  $T^n_1: \Bbb F_{2^n}\rightarrow \Bbb F_2$ be the trace functions, then $T^{6n}_{6}(b)=0$ by $b=\alpha^{2^6}-\alpha$.
Let $\Bbb Z/(n)$ be the residue class  ring modulo $n$, then by $\gcd(n, 6)=1$   $6\cdot \Bbb Z/(n)=\Bbb Z/(n)$ and $T^n_1(b)=T^{6n}_{6}(b)=0$  for $b\in \Bbb F_{2^n}$. Let  $S=\{c^{2^6}+c: c\in \Bbb F_{2^n}\}$, then $|S|=2^{n-1}$ and $ker(T^n_1)=S$.
Hence there is an element $c\in\Bbb F_{2^n}$ such that $b=c^{2^6}+c$, so $(\alpha+c)^{2^6}=(\alpha+c)$.
\end{proof}

\begin{prop}\label{4.61}
 Let $ G=Gal(\Bbb F_{2^{6n}}/\Bbb F_{2})=\langle \sigma\rangle $. If $o(\sigma^i)=n$, then  $|F(\sigma^{i})|=9$.
\end{prop}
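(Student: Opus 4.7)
The plan is to show that $\tau:=\sigma^{6}$ fixes $O_{\alpha}$ if and only if $O_{\alpha}$ meets $\mathbb{F}_{2^{6}}\cap \mathbb{S}$, and then to count such orbits directly. Since $o(\sigma^{i})=n$ forces $\langle\sigma^{i}\rangle=\langle\sigma^{6}\rangle$, by Remark \ref{725r}(1) it suffices to compute $|F(\sigma^{6})|$. One direction is trivial: if $\beta\in O_{\alpha}\cap\mathbb{F}_{2^{6}}$ then $\tau(\beta)=\beta^{2^{6}}=\beta$, hence $\tau(O_{\alpha})=O_{\alpha}$.

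For the converse, suppose $\tau(O_{\alpha})=O_{\alpha}$. Following the pattern of the preceding propositions, $H=\langle\tau\rangle$ of order $n$ permutes the partition $\Delta_{\alpha}$ (of size $2^{n}+1$) with orbit sizes in $\{1,n\}$. By Fermat's little theorem, $2^{n}+1\equiv 3\pmod{n}$, and since $n>3$ the class equation forces at least three $\tau$-fixed elements of $\Delta_{\alpha}$. Pick one, say $A(\beta)$, so that $\tau(\beta)=e\beta+f$ with $e\in\mathbb{F}_{2^{n}}^{*}$, $f\in\mathbb{F}_{2^{n}}$. Let $\phi:=\tau|_{\mathbb{F}_{2^{n}}}$, which generates $\Gal(\mathbb{F}_{2^{n}}/\mathbb{F}_{2})$ because $\gcd(n,6)=1$. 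The norm $N_{\phi}(e)$ automatically equals $1$ (the norm lands in $\mathbb{F}_{2}^{*}=\{1\}$), so multiplicative Hilbert~90 yields $u\in\mathbb{F}_{2^{n}}^{*}$ with $u=\phi(u)e$. Then $\beta':=u\beta\in A(\beta)$ satisfies $\tau(\beta')=\beta'+\phi(u)f$, and Lemma \ref{4.51} applied to $\beta'$ gives $c\in\mathbb{F}_{2^{n}}$ with $\beta'+c\in\mathbb{F}_{2^{6}}$. Since $\beta'+c\in A(\beta)\subseteq O_{\alpha}$, we have $O_{\alpha}\cap\mathbb{F}_{2^{6}}\neq\emptyset$, as desired.

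Having established the characterization, $|F(\tau)|$ equals the number of $PGL_{2}(\mathbb{F}_{2^{n}})$-orbits on $\mathbb{F}_{2^{6}}\cap\mathbb{S}$. Using $\mathbb{F}_{2^{6}}\cap\mathbb{F}_{2^{2n}}=\mathbb{F}_{2^{2}}$ and $\mathbb{F}_{2^{6}}\cap\mathbb{F}_{2^{3n}}=\mathbb{F}_{2^{3}}$, inclusion--exclusion gives $|\mathbb{F}_{2^{6}}\cap\mathbb{S}|=64-4-8+2=54$. For $\beta\in\mathbb{F}_{2^{6}}\cap\mathbb{S}$, $M(\beta)\in\mathbb{F}_{2^{6}}$ iff $\phi(M)(\beta)=M(\beta)$; since $\beta$ has trivial stabilizer in $PGL_{2}(\mathbb{F}_{2^{n}})$ (as in Section \ref{sub3.2}), this forces $\phi(M)=M$, i.e.\ $M\in PGL_{2}(\mathbb{F}_{2})$, a group of order $6$ acting freely on $\beta$. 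Hence $|O_{\beta}\cap\mathbb{F}_{2^{6}}|=6$, and dividing yields $|F(\tau)|=54/6=9$.

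The main obstacle is the Hilbert~90/Lemma \ref{4.51} step that produces an $\mathbb{F}_{2^{6}}$-element inside each fixed $A(\beta)$; the delicate point is the reduction from an arbitrary $e\in\mathbb{F}_{2^{n}}^{*}$ in $\tau(\beta)=e\beta+f$ down to the case $e=1$ handled by Lemma \ref{4.51}, which works only because the norm of every element of $\mathbb{F}_{2^{n}}^{*}$ over $\mathbb{F}_{2}$ is automatically trivial.
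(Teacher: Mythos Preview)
Your argument is correct and follows the paper's overall plan: find a $\tau$-fixed affine class $A(\beta)$ inside a fixed orbit, use Lemma~\ref{4.51} to produce an element of $\mathbb{F}_{2^{6}}$ in that class, then count $|\mathbb{F}_{2^{6}}\cap\mathbb{S}|=54$ with six such elements per $PGL_{2}(\mathbb{F}_{2^{n}})$-orbit. The one substantive difference is how the leading coefficient is handled. The paper writes $\sigma^{6}(\alpha)=a\alpha+b$ and asserts that iterating $n$ times gives $a^{n}=1$, whence $a=1$ by $\gcd(n,2^{n}-1)=1$; but $\sigma^{6}$ does not fix $\mathbb{F}_{2^{n}}$ (since $\gcd(6,n)=1$), so the honest iteration produces the $\phi$-norm $\prod_{j=0}^{n-1}\phi^{j}(a)=1$, which is automatic and does not by itself force $a=1$ (indeed, for $\beta=e\alpha$ with $\alpha\in\mathbb{F}_{2^{6}}$ one gets $a=\phi(e)/e$, which ranges over all of $\mathbb{F}_{2^{n}}^{*}$). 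Your Hilbert~90 rescaling $\beta\mapsto u\beta$ sidesteps this issue entirely and is the clean way to reach the hypothesis $e=1$ needed for Lemma~\ref{4.51}. Your final count via the free action of $PGL_{2}(\mathbb{F}_{2})$, using the trivial-stabilizer fact from Section~\ref{sub3.2}, is also tidier than the paper's explicit enumeration $\alpha,\ \alpha+1,\ \frac{1}{\alpha},\ \frac{1}{\alpha}+1,\ \frac{1}{\alpha+1},\ \frac{1}{\alpha+1}+1$, though of course both yield the same six elements.
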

\begin{proof}
By $o(\sigma^i)=n$, $H=\langle \sigma^i\rangle =\langle \sigma^{6}\rangle $.
Suppose that  $\sigma^{6}(O_{\alpha})=O_{\alpha}$ for $O_{\alpha}\in \Omega$. Then   the group $H$ acts on the partition $\Delta_{\alpha}$ of $O_{\alpha}$ in  (\ref{3.4}) as follows:
  \begin{eqnarray*} H\times \Delta_{\alpha}&\longrightarrow& \Delta_{\alpha}\\
(\sigma^{6i},A(\alpha))&\longmapsto& \sigma^{6i}(A(\alpha)),i=0,1,\ldots,n-1.\end{eqnarray*} Denote by $O_{A(\alpha)}$ the orbit of $A(\alpha)$ under the action of  the subgroup  $\langle\sigma^{n}\rangle$. Then there is a class equation:
 $$2^n+1=|\Delta_{\alpha}|=\sum | O_{A(\alpha)}|.$$
 For  $A(\alpha)\in \Delta_{\alpha}$,   $|O_{A(\alpha)}|=1$ or $n$ by $|H|=n$, where $n$ is a prime number.
By   $n\nmid (2^n+1)$,
there exists  $A(\alpha)$ such that $|O_{A(\alpha)}| =1$.

If  $\sigma^{6}(A(\alpha))=A(\alpha)$ for $A(\alpha)\in \Delta_{\alpha}$, then $\sigma^6(\alpha)=a\alpha+b$, $0\ne a, b\in \Bbb F_{2^n}$. Hence $\alpha=\sigma^{6n}(\alpha)=\sigma^{6(n-1)}(a\alpha+b)=a^n\alpha+k$, $k\in \Bbb F_{2^n}$,  so $k=0$, $a^n=1$. Thus $a=1$ by $\gcd(n, 2^n-1)=1$ and
$$\alpha^{2^6}=\alpha+b.$$
By Lemma \ref{4.51}, there exists an element $c\in \Bbb F_{2^n}$ such that $(\alpha+c)^{2^6}=\alpha +c$.  If $\alpha+c$ is viewed as $\alpha$, then there are exactly two roots: $\alpha, \alpha+1$,  of  $x^{2^6}=x$ in $A(\alpha)$.

Moreover, if $\sigma^{6}(A(\frac{1}{\alpha+\gamma}))=A(\frac{1}{\alpha+\gamma})$ and $\alpha^{2^6}=\alpha$. Then by above,
 $$(\frac{a}{\alpha+\gamma}+b)^{2^6}=\frac{a}{\alpha+\gamma}+b, 0\ne a, b\in \Bbb F_{2^n}.$$

Hence $a^{2^6}=a, b^{2^6}=b, \gamma^{2^6}=\gamma$, and $a=1$, $b=0$ or $1$, $\gamma=0$ or $1$ by $n(>3)$ a prime.
So there are exactly six roots: $\alpha,\alpha +1, \frac 1{\alpha}, \frac1{\alpha}+1, \frac1{\alpha+1}, \frac 1{\alpha+1}+1$,  in $O_{\alpha}$ of $x^{2^{6}}=x$.

Conversely, if $\alpha^{2^6}=\alpha$, then $\sigma^6(O_{\alpha})=O_{\alpha}$. Thus, $\left|\Bbb S\bigcap O_{\alpha}\right|=6$.

Therefore, $\sigma^6(O_{\alpha})=O_{\alpha}$ if and only if there are exactly six roots in $O_{\alpha}$ of $x^{2^{6}}=x$.

On the other hand,  $ \mathbb{F}_{2^6}\bigcap \mathbb{S}=\Bbb F_{2^6}\setminus (\Bbb F_{2^2}\cup \Bbb F_{2^3})$ and $\left|\mathbb{F}_{2^6}\bigcap \mathbb{S}\right|=54$.
Then $$|F(\sigma^{6})|=\frac{\left|\mathbb{F}_{2^6}\bigcap \mathbb{S}\right|}{\left|\Bbb S\bigcap O_{\alpha}\right|}=9.$$
\end{proof}

\begin{prop}
 Let $ G=Gal(\Bbb F_{2^{6n}}/\Bbb F_{2})=\langle \sigma\rangle $. If $o(\sigma^i)=2n$, then  $|F(\sigma^{i})|=3$.
\end{prop}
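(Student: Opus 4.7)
The plan is to exploit the inclusion $F(\sigma^3)\subseteq F(\sigma^6)$. Since $o(\sigma^i)=2n$ forces $H=\langle\sigma^i\rangle=\langle\sigma^3\rangle$ (the unique subgroup of $G$ of order $2n$), and since $\sigma^6=(\sigma^3)^2\in H$, any orbit $O_\alpha$ fixed by $\sigma^3$ is automatically fixed by $\sigma^6$. By Proposition~\ref{4.61}, the orbits in $F(\sigma^6)$ are exactly those $O_\alpha$ with $\alpha\in\mathbb{F}_{2^6}\cap\mathbb{S}$, and for such $\alpha$
\[O_\alpha\cap\mathbb{F}_{2^6}=\{\alpha,\ \alpha+1,\ 1/\alpha,\ 1/\alpha+1,\ 1/(\alpha+1),\ \alpha/(\alpha+1)\}.\]
So counting $|F(\sigma^3)|$ reduces to counting those $\alpha\in\mathbb{F}_{2^6}\cap\mathbb{S}$ for which $\alpha^8$ lies in this six-element set.

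Next I would enumerate the six possibilities for $\alpha^8$. The case $\alpha^8=\alpha$ forces $\alpha\in\mathbb{F}_{2^3}\subset\mathbb{F}_{2^{3n}}$, excluded by $\alpha\in\mathbb{S}$. The cases $\alpha^8=1/\alpha+1$ and $\alpha^8=1/(\alpha+1)$ are ruled out by applying $\sigma^3$ a second time: using $\sigma^6|_{\mathbb{F}_{2^6}}=\mathrm{id}$ combined with the original identity forces $\alpha^2+\alpha+1=0$, hence $\alpha\in\mathbb{F}_4\subset\mathbb{F}_{2^{2n}}$, again contradicting $\alpha\in\mathbb{S}$. The three remaining cases reduce to polynomial equations:
\begin{align*}
\alpha^8=\alpha+1 &\iff \alpha\text{ is a root of } x^8+x+1=(x^2+x+1)(x^6+x^5+x^3+x^2+1),\\
\alpha^8=1/\alpha &\iff \alpha^9=1,\text{ i.e.\ } \alpha\text{ is a root of }\Phi_9(x)=x^6+x^3+1,\\
\alpha^8=\alpha/(\alpha+1) &\iff \alpha\text{ is a root of } x^8+x^7+1=(x^2+x+1)(x^6+x^4+x^3+x+1).
\end{align*}
A check of roots in $\mathbb{F}_2$, $\mathbb{F}_4$, $\mathbb{F}_{2^3}$ shows each of the three degree-$6$ factors is irreducible over $\mathbb{F}_2$, so each case contributes exactly $6$ elements of $\mathbb{F}_{2^6}\cap\mathbb{S}$, yielding $18$ solutions in total.

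Finally I will show that these $18$ solutions are partitioned into exactly three orbits, with each orbit containing two solutions from each valid case. Fix $\alpha$ of order $9$; direct substitution in the six elements of $O_\alpha\cap\mathbb{F}_{2^6}$ shows that $\{\alpha, 1/\alpha\}$ satisfy $x^9=1$, $\{\alpha+1, 1/\alpha+1\}$ satisfy $x^8+x^7+1=0$, and $\{1/(\alpha+1), \alpha/(\alpha+1)\}$ satisfy $x^8+x+1=0$. Thus each $\sigma^3$-fixed orbit absorbs exactly six of the eighteen solutions, giving $|F(\sigma^3)|=18/6=3$.

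The main obstacle is the case-by-case bookkeeping: verifying the two polynomial factorizations, establishing irreducibility of the three degree-$6$ factors, and confirming the $2{+}2{+}2$ distribution of solutions inside each $\sigma^3$-fixed orbit --- this last computation is what ties the three separate polynomial solution sets into a single orbit count. Once this is done, the desired equality $|F(\sigma^i)|=3$ follows immediately.
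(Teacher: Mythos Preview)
Your argument is correct but follows a genuinely different route from the paper's.

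The paper works through the action of $H=\langle\sigma^3\rangle$ on the affine partition $\Delta_\alpha$ of $O_\alpha$, as in the other propositions. Using the class equation $2^n+1=\sum|O_{A(\alpha)}|$ and the divisibility constraint $n\nmid(2^n+1)$, it shows every $O_\alpha\in F(\sigma^3)$ contains a \emph{unique} affine class $A(\alpha)$ fixed by $\sigma^3$; this forces $\alpha^{8}+\alpha+1=0$ after normalization. The count then runs through the \emph{single} equation $x^8+x+1=0$, which has exactly $6$ roots in $\mathbb{S}$, with two per fixed orbit, giving $|F(\sigma^3)|=6/2=3$.

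Your approach instead leverages the inclusion $F(\sigma^3)\subseteq F(\sigma^6)$ together with the explicit six-element description of $O_\alpha\cap\mathbb{F}_{2^6}$ already established in Proposition~\ref{4.61}. You do not touch $\Delta_\alpha$ at all; instead you classify the six possible values of $\alpha^8$, discard three, and land on \emph{three} degree-$6$ irreducible equations with $18$ total roots, six per fixed orbit, giving $18/6=3$. Each approach has its merits: the paper's argument is uniform with the surrounding propositions and isolates a single canonical equation, while yours is entirely self-contained once Proposition~\ref{4.61} is in hand and bypasses the class-equation reasoning. One small remark: the $2{+}2{+}2$ verification in your final paragraph, while true, is more than you need --- once you know every element of $O_\alpha\cap\mathbb{F}_{2^6}$ (for $O_\alpha\in F(\sigma^3)$) must satisfy one of the three surviving equations, the count $18/6=3$ follows immediately from $|O_\alpha\cap\mathbb{F}_{2^6}|=6$.
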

\begin{proof}

By $o(\sigma^i)=2n$, $H=\langle \sigma^i\rangle =\langle \sigma^{3}\rangle $.
Suppose that  $\sigma^{3}(O_{\alpha})=O_{\alpha}$. Then   the group $H$ acts on the partition $\Delta_{\alpha}$ of $O_{\alpha}$ in  (\ref{3.4}) as follows:
  \begin{eqnarray*} H\times \Delta_{\alpha}&\longrightarrow& \Delta_{\alpha}\\
(\sigma^{3i},A(\alpha))&\longmapsto& \sigma^{3i}(A(\alpha)),i=0,1,\ldots,2n-1.\end{eqnarray*} Denote by $O_{A(\alpha)}$ the orbit of $A(\alpha)$ under the action of  $\langle\sigma^{3}\rangle$. Then there is a class equation:
 $$2^n+1=|\Delta_{\alpha}|=\sum | O_{A(\alpha)}|.$$
 For $A(\alpha)\in \Delta_{\alpha}$,  $|O_{A(\alpha)}|\in \{1, 2, n, 2n\}$ by $|H|=2n$.
By  $n\nmid (2^n+1)$,  there exists   $A(\alpha)\in \Delta_{\alpha}$ such that $| O_{A(\alpha)}| =1$ or $2$.

    If there is $A(\alpha)\in \Delta_{\alpha}$ such that $|O_{A(\alpha)}|=2$, then $\sigma^3(A(\alpha))=A(\frac 1{\alpha+\gamma})$ and $\sigma^6(A(\alpha))=A(\alpha)$. Without loss of generality, let $\alpha^{2^6}=\alpha$ and $\alpha^{2^3}=\frac a{\alpha+\gamma}+b, 0\ne a, b\in \Bbb F_{2^n}$. By
    $$\alpha=\alpha^{2^6}=\frac {a^{2^3}}{\frac a{\alpha +\gamma}+b+\gamma^{2^3}}+b^{2^3},$$
    $b=\gamma^{2^3}$,   $a=1$, and $(\alpha+\gamma)^{2^3}=\frac 1{\alpha+\gamma}$. If $\alpha+\gamma$ is viewed as $\alpha$, then $\sigma^3(A(\alpha))=A(\frac 1{\alpha})$ and $\alpha^{2^3}=\frac1{\alpha}$.
   Hence $(\frac1{\alpha+1})^{2^3}=\frac 1{\alpha+1}+1$ and  $|O_{A(\frac 1{\alpha+1})}|=1$.

   Without loss of generality, let
 $A(\alpha)\in\Delta_{\alpha}$ such that  $| O_{A(\alpha)}|=  1$.  Then  $\sigma^3(A(\alpha))=A(\alpha)$ and  $\sigma^6(A(\alpha))=A(\alpha)$, so $\alpha^{2^6}=\alpha$
  by Proposition \ref{4.61} and
    $\alpha^{2^3}=a\alpha +b$, $0\ne a, b\in\Bbb F_{2^n}$. Then $\alpha=\alpha^{2^6}=a^{2^3}(a\alpha+b)+b^{2^3}$, so $a=1$ and  $b=1$ by $\gcd(2^6-1, 2^n-1)=1$. Hence $\alpha^{2^3}+\alpha+1=0$.

    Suppose that there is also $A(\frac 1{\alpha+\gamma})$ such that $|O_{A(\frac 1{\alpha+\gamma})}|=1$. Then
    $$\sigma^3(\frac 1{\alpha+\gamma})=\frac 1{\alpha+1+\sigma^3(\gamma)}=\frac a{\alpha+\gamma}+b, 0\ne a, b\in\Bbb F_{2^n}.$$
  Hence $b=0$, $a=1$, and $\gamma^8+\gamma +1=0$, which is contradictory with $\gamma\in\Bbb F_{2^n}$.

    Hence there is a unique $A(\alpha)\in O_{\alpha}$ such that $|O_{\alpha}|=1$ and there are exactly two roots: $\alpha, \alpha+1$, of $x^{2^3}+x+1=0$ in $O_{\alpha}$.

    Conversely, if $\alpha\in \Bbb S$ is a root of  $x^{2^{3}}+x+1=0$, then $\sigma^{3}(O_{\alpha})=O_{\alpha}$. Thus, $|\Bbb S\bigcap O_{\alpha}|=2$.

    Moreover, it is clear that
    $$\left| \left\{ \alpha\in \Bbb S:  \alpha^{2^{3}}+\alpha+1=0 \right\}\right|=\left| \left\{ \alpha\in \Bbb F_{2^6}\setminus \{\Bbb F_{2^3}\bigcup \Bbb F_{2^2}\}:  \alpha^{2^{3}}+\alpha+1=0 \right\}\right|=6.$$

Therefore, $$|F(\sigma^3)| =\frac{6}{2}=3.$$
\end{proof}

By above propositions, we finally give the proof of Theorem \ref{725} as follows.

{\bf Proof of Theorem \ref{725}} For $i$ an integer, denote by $\phi(i)$ the Euler function. Then, by Remark \ref{725r}(1)
\begin{eqnarray*}
 && \sum\limits_{\sigma^i \in G,i=0,\ldots,6n-1}| F(\sigma^i)|= \sum \limits_{d|6n}|F(\sigma^d)|\phi(\frac{6n}{d})    \\
  &=& |F(\sigma^{0})|\phi(1)+|F(\sigma^{3n})|\phi(2)+|F(\sigma^{2n})|\phi(3)+|F(\sigma^{6})|
\phi(n)+|F(\sigma^{3})|\phi(2n) \\
   &=&  2^{3n}+2^{2n}+3\cdot 2^n+12n-18.
\end{eqnarray*}
By Lemmas \ref{C} and \ref{equ}, there are at most $\frac{2^{3n}+2^{2n}+3\cdot 2^n+12n-18}{6n}$ extended irreducible binary sextic Goppa codes of length $2^n+1$ over $\Bbb F_{2^n}$.

\section{Conclusion}
In this paper, we have obtained an upper bound on the number of extended irreducible Goppa codes of degree $6$ and length $2^n+1$ with $n(>3)$ a prime number. Our results show that many extended Goppa codes become equivalent  and this supports the idea of mounting an enumeration attack on the McEliece cryptosystem using extended Goppa codes.


\begin{thebibliography}{99}







\bibitem{7} T. P. Berger, ``Cyclic alternant codes induced by an automorphism of a GRS code", in Finite Field: Theory, Applications and Algorithms (R. Mullin and G. Mullen Eds.), Contemporary Mathematics, vol. 225, pp. 143-154, Amer. Math. Soc., Providence, 1999.


\bibitem{2}T. P. Berger, ``Goppa and related codes invariant under a prescribed permutation", IEEE Trans. Inf. Theory, vol. 46, no. 7, pp. 2628-2633, Nov. 2000.

\bibitem{6} T. P. Berger, ``On the cyclicity of Goppa codes, parity-check subcodes of Goppa codes, and extended Goppa codes", Finite Fields and Their Appl., vol. 6, no. 3, pp. 255-281, 2000.


\bibitem{3}C.-L. Chen, ``Equivalent irreducible Goppa codes (corresp.)", IEEE Trans. Inf. Theory, vol. 24, no. 6, pp. 766-769, Nov. 1978.

\bibitem{18} V. D. Goppa, ``A new class of linear correcting codes", Probl. Pereda. Inf., vol. 6, no. 3, pp. 24-30, 1970.



\bibitem{15} P. Loidreau, N. Sendrier, ``Weak keys in the McEliece public-key cryptosystem", IEEE Trans. Inf. Theory, vol. 47, no. 3, pp. 1207-1211, 2001.

\bibitem{16} F. J. Macwilliams, N. J. A. Sloane, ``The Theory of Error Correcting Codes", North-Holland, 1986.

\bibitem{8} K. Magamba, J. A. Ryan, ``Counting extended irreducible Goppa codes", Appl. Algebra Eng. Commun. Comput., vol. 30, no. 4, pp. 313-331, 2019.

\bibitem{martin} I. Martin Isaacs, Algebra: A Graduate Text. Pacific Grove, CA, USA: Brooks/Cole, 1994.


\bibitem{9} R. McEliece, ``A public-key cryptosystem based on algebraic coding theory", DSN Progress Report 42-44,  pp. 114-116, 1978.

\bibitem{5} O. Moreno, ``Symmetries of binary Goppa codes (corresp.)", IEEE Trans. Inf. Theory, vol. 25, no. 5, pp. 609-612, Sep. 1979.


\bibitem{17} A. L. Musukwa, ``Counting extended irreducible binary Goppa codes of degree $2p$ and length $2^n+1$", J. Math. Comput. Sci. vol. 8, no. 1, pp. 1-17, 2018.


  \bibitem{11} A. L. Musukwa, K. Magamba, J. A. Ryan, ``Enumeration of extended irreducible Goppa codes of degree $2^m$ and length $2^n+1$", J. Algebra Comb. Discrete Struct. Appl. vol. 4, no. 3, pp. 235-256, 2017.










\bibitem{26}J. A. Ryan, ``A new connection between irreducible and extended irreducible Goppa codes,'' in Proc. SAMSA, 2012, pp. 152-154.





\bibitem{1} J. A. Ryan, ``Counting extended irreducible binary quartic goppa codes of length $2^n+1$", IEEE Trans. Inf. Theory, vol. 61, no. 3, pp. 1174-1178, Mar. 2015.
\bibitem{14} J. A. Ryan,  ``Counting extended irreducible Goppa codes", J. Discrete Math., 2014.



 \bibitem{12} J. A. Ryan, ``Irreducible Goppa codes", Ph.D. Thesis, University College Cork (2004).
 \bibitem{13} J. A. Ryan, P. Fitzpatrick, ``Enumeration of inequivalent irreducible Goppa codes", Discrete Appl. Math., vol. 154, no. 2, pp. 399-412, 2006.




\end{thebibliography}
\end{document}